\begin{document}
\newenvironment {proof}{{\noindent\bf Proof.}}{\hfill $\Box$ \medskip}

\newtheorem{theorem}{Theorem}[section]
\newtheorem{lemma}[theorem]{Lemma}
\newtheorem{condition}[theorem]{Condition}
\newtheorem{proposition}[theorem]{Proposition}
\newtheorem{remark}[theorem]{Remark}
\newtheorem{definition}[theorem]{Definition}
\newtheorem{hypothesis}[theorem]{Hypothesis}
\newtheorem{corollary}[theorem]{Corollary}
\newtheorem{example}[theorem]{Example}
\newtheorem{descript}[theorem]{Description}
\newtheorem{assumption}[theorem]{Assumption}

\def\longrightharpoonup{\relbar\joinrel\rightharpoonup}
\def\longleftharpoondown{\leftharpoondown\joinrel\relbar}

\def\longrightleftharpoons{
  \mathop{
    \vcenter{
      \hbox{
      \ooalign{
        \raise1pt\hbox{$\longrightharpoonup\joinrel$}\crcr
	  \lower1pt\hbox{$\longleftharpoondown\joinrel$}
	  }
      }
    }
  }
}

\newcommand{\rates}[2]{\displaystyle
\mathrel{\longrightleftharpoons^{#1\mathstrut}_{#2}}}
\newcommand{\br}[1]{\langle #1\rangle}

\def\P{\mathbb{P}}
\def\R{\mathbb{R}}
\def\C{\mathbb{C}}
\def\E{\mathbb{E}}
\def\N{\mathbb{N}}
\def\Z{\mathbb{Z}}

\renewcommand {\theequation}{\arabic{section}.\arabic{equation}}
\def \non{{\nonumber}}
\def \hat{\widehat}
\def \tilde{\widetilde}
\def \bar{\overline}

\def\ind{{\mathchoice {\rm 1\mskip-4mu l} {\rm 1\mskip-4mu l}
{\rm 1\mskip-4.5mu l} {\rm 1\mskip-5mu l}}}

\title{\bf Dynamic disorder in simple enzymatic reactions induces stochastic amplification of substrate}
\author[1]{Ankit Gupta}
\author[1,2]{Andreas Milias-Argeitis}
\author[1]{Mustafa Khammash\thanks{mustafa.khammash@bsse.ethz.ch}}
\affil[1]{Department of Biosystems Science and Engineering, ETH Zurich, Mattenstrasse 26, 4058 Basel, Switzerland.}
\affil[2]{Groningen Biomolecular Sciences and Biotechnology Institute, University of Groningen, Nijenborgh 4, 9747 AG Groningen, the Netherlands.}

\date{}
\maketitle

\begin{abstract}
A growing amount of evidence over the last two decades points to the fact that many enzymes exhibit fluctuations in their catalytic activity, which are associated with conformational changes on a broad range of timescales. The experimental study of this phenomenon, termed dynamic disorder, has become possible thanks to advances in single-molecule enzymology measurement techniques, through which the catalytic activity of individual enzyme molecules can be tracked in time. The biological role and importance of these fluctuations in a system with a small number of enzymes such as a living cell, have only recently started being explored.

In this work, we examine a simple stochastic reaction system consisting of an inflowing substrate and an enzyme with a randomly fluctuating catalytic reaction rate that converts the substrate into an outflowing product. To describe analytically the effect of rate fluctuations on the average substrate abundance at steady-state, we derive an explicit formula that connects the relative speed of enzymatic fluctuations with the mean substrate level. Under fairly general modeling assumptions, we demonstrate that the relative speed of rate fluctuations can have a dramatic effect on the mean substrate, and lead to large positive deviations from predictions based on the assumption of deterministic enzyme activity. Our results also establish an interesting connection between the amplification effect and the mixing properties of the Markov process describing the enzymatic activity fluctuations, which can be used to easily predict the fluctuation speed above which such deviations become negligible. As the techniques of single-molecule enzymology continuously evolve, it may soon be possible to study the stochastic phenomena due to enzymatic activity fluctuations within living cells. Our work can be used to formulate experimentally testable hypotheses regarding the nature and magnitude of these fluctuations, as well as their phenotypic consequences.
\end{abstract} 

\noindent {\bf Keywords:} stochastic amplification; enzymatic fluctuations; dynamic disorder; Markov models \\
 
\noindent {\bf Mathematical Subject Classification (2010):}  92C42; 92C45; 60J22; 60J28; 65C40.

\setcounter{equation}{0}
\section{Introduction}
First made almost two decades ago, observations of enzymatic turnovers for single enzyme molecules have allowed scientists to probe enzyme behavior beyond the regime of high-copy numbers and ensemble averages \cite{Lu98}. Thanks to advances brought by experimental techniques such as single-molecule fluorescence spectroscopy \cite{Enderlein11,Lu14}, the field of single-molecule enzymology developed rapidly in the subsequent years. The key observation made possible by single-molecule assays is that the catalytic rates of single enzyme molecules often display very large dynamic fluctuations over timescales much longer than the typical reaction cycle times, most likely driven by slow (spontaneous or induced) transitions in conformation \cite{Lu98,Lu05,English06,Xie02}.

Around the time when the first single enzyme molecules were observed in action, the mathematical theory of \emph{dynamic disorder} was introduced by Zwanzig \cite{Zwanzig90}, motivated by several observations of different physico-chemical processes with a seemingly common underlying cause that boiled down to random fluctuations of key process properties. The phenomenon of dynamic disorder refers to fluctuations in enzymatic reaction rates that occur at a timescale that is either slower or comparable to the reaction timescale \cite{Dan07}. These fluctuations are often caused by \emph{slow} transitions in the conformational state of enzymes. The most simple example of dynamic disorder first considered by Zwanzig \cite{Zwanzig90} involved a so-called ``rate process controlled by passage through a fluctuating bottleneck'' \cite{Zwanzig92}. In the language of chemical kinetics, it describes the removal of a substrate, $S$, from a system at rate $\gamma(t)S(t)$, where the time-varying rate $\gamma(t)$ is a (typically Markovian) stochastic process (see Figure \ref{fig:model}). As the speed of $\gamma(t)$ fluctuations tends to zero, the reaction rate $\gamma(t)$ becomes a random variable which does not change with time, and we transition to the regime of \emph{static} disorder \cite{Zwanzig90}. On the other hand, as the speed of $\gamma(t)$ fluctuations tends to infinity, the dynamic disorder \emph{vanishes} on the timescale of substrate kinetics, and we recover the classical case where the reaction rate $\gamma(t)$ becomes a deterministic constant. Our goal in this paper is to investigate the effects of $\gamma(t)$ fluctuations between these two extremities, where most realistic systems are likely to lie.

Thanks to the mathematical theory of dynamic disorder, stochastically fluctuating enzyme activities can be understood and studied within a consistent mathematical framework that can also generate testable experimental predictions \cite{Schenter99,Kou05,Reichman06,Dan07,English06,Lu98}. Already in \cite{Zwanzig90} it was observed that dynamically disordered systems can give rise to macroscopic observations that differ from those expected in the absence of disorder. In subsequent years, a large body of theoretical and computational work has examined various alternative enzymatic reaction schemes, mostly focusing on the enzyme dynamics itself, e.g. on the autocorrelation of fluctuations and the distribution of waiting times between turnover events \cite{min2005}. On the other hand, dynamic disorder has been observed for several biologically relevant enzymes \cite{Tan11}, suggesting that it is ubiquitous in the cellular context. Early work \cite{English06} had already noted that enzymatic fluctuations could play an important biological role in a system containing only a small number of enzyme molecules, as often happens within a living cell, and the recent \emph{in vivo} observation of fluctuating enzymatic activity confirms this claim \cite{Iversen14}.

Besides studying the intrinsic mathematical properties of dynamically disordered enzymes, it would be also highly instructive and relevant for biology to examine the \emph{consequences} of dynamic disorder on substrate statistics (a first example of such a study is given in \cite{Iversen14}). Experimental work in this area is still done \emph{in vitro} using constant and large substrate pools. Here, on the contrary, we provide a mathematical treatment of how dynamic disorder alters the substrate mean abundance in the presence of substrate \emph{inflow}, a condition closer to biological reality that has also been considered in \cite{Grima14,Stefanini05}. To this end, we analytically examine a highly simplified stochastic system with a randomly fluctuating catalytic reaction rate and describe the effect of rate fluctuations on the average substrate abundance. Under fairly general conditions, we demonstrate that the relative speed of rate fluctuations can have a dramatic effect on the mean substrate, and lead to large positive deviations from predictions based on the assumption of deterministic enzyme activity. Using a Markovian model for enzyme kinetics, we mathematically characterize this effect by deriving an explicit formula for the steady-state substrate-mean as a function of the relative speed of enzymatic fluctuations. From this formula we show that for any finite speed-value, the steady-state substrate-mean is \emph{sandwiched} between the two values obtained in the static and the deterministic regimes. Furthermore, we demonstrate that the mapping between the relative speed of enzyme kinetics and the substrate-mean at steady-state can be well-approximated by a convex, monotonically decreasing function whose key shape parameter depends on the ``mixing strength" of the Markov process describing enzyme kinetics. This mixing strength can be measured by computing an appropriate  \emph{Dirichlet form} \cite{Saloff-Coste} of the Markov process. Even though we consider a highly simplified situation, our analysis can serve as a guide in the case of more realistic, but analytically intractable enzymatic reaction schemes. Our results \emph{only} depend on the enzymatic fluctuations, but they do not depend on the fluctuations caused by the low abundance of substate molecules (see \cite{Elowitz}), although we account for these fluctuations by modelling the substrate kinetics as a jump Markov chain. Indeed the results we present remain unchanged even if we discard these fluctuations and describe the substrate kinetics as an ordinary differential equation (ODE) with a fluctuating rate constant $\gamma(t)$.

Enzymatic fluctuations can also arise from sources other than dynamic disorder. The abundance or the availability of enzyme molecules may also fluctuate due to gene expression noise \cite{Elowitz,Oyarzun}, and their chances of finding substrate molecules can be diffusion-limited \cite{DiffLim}. In this work, we do not distinguish between the various sources of fluctuations and model the aggregate enzymatic activity by a Markovian stochastic process.

The biological significance of our findings is manifold since enzymatic interactions are ubiquitous is cell biology and the effects of enzymatic noise in metabolic networks have only recently started to be explored \cite{Oyarzun}. Using the relative speed of enzymatic fluctuations as parameter, our results provide a clear way to determine if the deterministic approximation is a faithful representation of reality. Our results can shed light on the timescale disparities that exist between enzyme and substrate kinetics. In particular, we see that enzyme kinetics needs to be ``fast'' in order to avoid any undesirable amplification of the mean substrate abundance due to inevitable variations in the enzymatic states. On the other hand, one can envisage situations where it would be beneficial for enzymes to be ``slow'' so that their fluctuations amplify a weak signal and enable its detection by the intracellular machinery (see Section \ref{example:twostate}). Such a signal-detection mechanism was the main motivation behind \emph{stochastic focusing}, a sensitivity amplification phenomenon introduced in \cite{Paulsson00}. We illustrate our results on the reaction scheme of \cite{Paulsson00} in Section \ref{example:sf}, where we characterize how the substrate-mean changes with the speed of the enzyme abundance dynamics. Note that in situations where enzyme kinetics is ``slow'', undesirable amplification effects can be eliminated by feedback mechanisms \cite{Milias15}. In this context, our results can help in postulating the presence of feedback loops using experimental data. We discuss the biological importance of our results in greater detail in Section \ref{sec:disc}.

It is interesting to note that some of the expressions we derive are related to those obtained in the analysis of various physico-chemical quantum dynamical systems coupled to a randomly fluctuating environment. This theory dates back to the original work of Kubo and Anderson \cite{Anderson54,Kubo54} and, more recently, has been generalized to arbitrary quantum systems described by the Liouvile-von Neumann equation with Markovian and non-Markovian parametric noise \cite{goychuk2004quantum,goychuk2005quantum,goychuk2005rate}. The example system of Ref. \cite{goychuk2005rate} can be interpreted as a substrate decaying with a stochastically fluctuating rate. While similar to it, the system we consider here includes the inflow of substrate, which results in a non-zero steady-state and requires a different mathematical treatment.

\section{Results}
\subsection{The model}\label{sec:model}
We consider a system into which a substrate, ${\bf S}$, enters at a constant rate $k_{ \textnormal{in} }$ and is degraded or (equivalently) converted into a product ${\bf P}$ that in turn leaves the system. The rate of substrate outflow depends on the activity state or abundance level of an enzyme ${\bf E}$. In turn, the catalytic activity of ${\bf E}$, denoted by $(\gamma(t) )_{t \geq 0 }$, is assumed to fluctuate in time $t$ according to a continuous-time Markov chain (CTMC) over a finite state-space $\Gamma = \{ \gamma_1,\dots,\gamma_n \}$. Here each $\gamma_i$ is a positive constant denoting the degradation rate constant at the $i$-th enzymatic state or abundance level. Due to fluctuations in the catalytic activity of ${\bf E}$, the degradation rate of substrate ${\bf S}$ will also fluctuate in time according to a stochastic process $( k_{d,S}(t) )_{t \geq 0}$ whose value at time $t$ is given by $k_{d,S}(t)= \gamma(t)S(t)$, where $S(t)$ is the molecular count or \emph{concentration} of the substrate. This model is summarised in Figure \ref{fig:model}.
\begin{figure}[!h]
\begin{center}
    \includegraphics[width=0.4\textwidth]{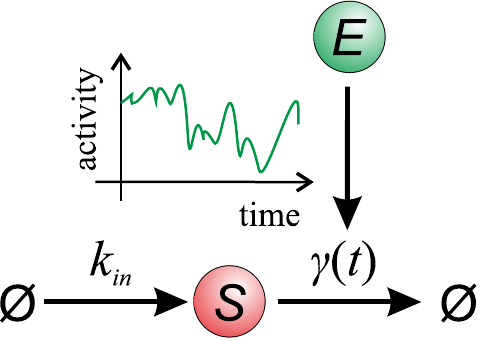}
\end{center}
    \caption{Schematic of the modeled system.}
    \label{fig:model}
\end{figure}

As mentioned before, the degradation reaction ${\bf S} \stackrel{ \gamma(t) }{\longrightarrow} {\bf \emptyset}$ can also be viewed as a conversion reaction ${\bf S} \stackrel{ \gamma(t) }{\longrightarrow} {\bf P}$ which is catalyzed by the enzyme. Generally this catalytic step proceeds through the \emph{reversible} formation of an intermediate complex ${\bf S.E}$ which is formed when an enzyme molecule binds to a substrate molecule. In other words, the single reaction $ {\bf S} \stackrel{ \gamma(t) }{\longrightarrow} {\bf P}$ is an abstraction for the following three reactions:
\begin{linenomath*}
\begin{align*}
{\bf S} + {\bf E} \longrightleftharpoons {\bf S.E}  \longrightarrow {\bf P} + {\bf E}.
\end{align*}
\end{linenomath*}
If the binding/unbinding rates of ${\bf S}$ and ${\bf E}$ molecules is much higher than the rate of the conversion reaction, then we can apply the quasi-stationary assumption to conclude that the model in Figure \ref{fig:model} is a good approximation to the catalytic conversion dynamics (for more details see the Supplementary Material in \cite{Milias15}).

To describe the CTMC $(\gamma(t) )_{t \geq 0 }$, we need to specify its $n \times n$ transition rate matrix $Q=[q_{ij}]$ (see \cite{Norris98}). For any distinct $i,j \in \{1,2,\dots,n\}$, $q_{ij} \geq 0$ denotes the rate at which the process leaves state $\gamma_i$ and enters state $\gamma_j$. The diagonal entries of $Q$ are given by $q_{ii} = -\sum_{j \neq i}q_{ij}$. From now on we assume that the rate matrix $Q$ is irreducible\footnote{A matrix $Q$ is called \emph{irreducible} if there does not exist a permutation matrix $P$ such that the matrix $P Q P^{-1}$ is block upper-triangular.} which implies that there exists a unique stationary distribution $\pi = (\pi_1,\dots,\pi_n) \in \R^n_+$ satisfying
\begin{linenomath*}
\begin{align*}
Q {\bf 1} = {\bf 0},  \qquad  \pi^T Q  = {\bf 0}^T  \qquad \textnormal{and} \qquad \pi^T {\bf 1} = 1,
\end{align*}
\end{linenomath*}
where ${\bf 0}$ and ${\bf 1}$ denote the $n \times 1$ vectors of all zeroes and ones respectively. Since the state-space is finite and the transition rate matrix $Q$ is irreducible, the CTMC $( \gamma (t) )_{t  \geq 0 }$ is \emph{ergodic} which means that the probability distribution of $\gamma(t)$ converges to the stationary distribution $\pi$ as $t \to \infty$. As we are interested in the steady-state limit, without loss of generality we can assume that the initial state $\gamma(0)$ is distributed according to $\pi$, i.e. $\P( \gamma(0) = \gamma_i )$ for each $i=1,\dots,n$. This ensures that the process $(\gamma(t) )_{t \geq 0 }$ is a \emph{stationary} stochastic process whose various statistical properties do not depend on time\footnote{For a more rigorous definition of a stationary stochastic process see the Supplementary Material.}. In particular its mean $\E( \gamma(t) )$ is equal to
\begin{linenomath*}
\begin{align}
\label{stationarymean:gamma}
\E( \gamma(t) ) = \E_\pi(\gamma)=  \sum_{i=1}^n \gamma_i \pi_i \qquad \textnormal{for all} \quad t \geq 0,
\end{align}
\end{linenomath*}
where $\gamma$ is a $\Gamma$-valued random variable with probability distribution $\pi$ and $\E_\pi(\cdot)$ denotes the expectation w.r.t. this distribution.

From now on, we regard $( \gamma(t) )_{t \geq 0}$ as the \emph{baseline} process which corresponds to enzymatic dynamics at the natural timescale. In order to study the substrate behavior, we need to view enzymatic dynamics at the timescale of substrate kinetics. For this we define a family of processes $( \gamma_c(t) )_{t  \geq 0 }$ parameterised by the ``relative speed" parameter $c$ as follows:
\begin{linenomath*}
\begin{align}
\label{defngammac}
\gamma_c(t) = \gamma(c t) \qquad \textnormal{for all} \quad t \geq 0.
\end{align}
\end{linenomath*}
Note that one time-unit of process $( \gamma_c(t) )_{t  \geq 0 }$ corresponds to $c$ time-units of process $( \gamma(t) )_{t \geq 0}$. In this sense, the parameter $c$ sets the speed of the fluctuation dynamics for the enzyme \emph{relative} to the speed of the substrate kinetics. Like $( \gamma(t) )_{t \geq 0}$, the process $( \gamma_c(t) )_{t \geq 0}$ is also a CTMC over state-space $\Gamma = \{ \gamma_1,\dots,\gamma_n \}$ with transition rate matrix $Q_c= cQ$ and initial distribution $\pi$. Since $( \gamma(t) )_{t \geq 0}$ is stationary, this process is also stationary with the same mean given by $\E_\pi(\gamma) = \E( \gamma_c(t) ) $ for all times $t \geq 0$. Replacing $( \gamma(t) )_{t \geq 0}$ by $( \gamma_c(t) )_{t  \geq 0 }$ in the model depicted in Figure \ref{fig:model}, we will study how the steady-state mean of substrate abundance depends on the fluctuation speed $c$.

Given a sample path of the enzyme dynamics $( \gamma_c(t) )_{t  \geq 0 }$ with relative speed $c$, we regard the dynamics of substrate molecular counts as a jump Markov chain $( S_c(t))_{t \geq 0}$ over the set of nonnegative integers $\N_0= \{0,1,2,\dots\}$. This Markov chain can be written in the random time change representation \cite{EK} as
\begin{linenomath*}
\begin{align}
\label{sub_rtc}
S_c(t) = S_c(0) + Y_1(  k_{ \textnormal{in} } t  ) - Y_2\left( \int_0^t \gamma_c(u) S_c(u)du \right),
\end{align}
\end{linenomath*}
where $Y_1$ and $Y_2$ are independent, unit rate Poisson processes. From this representation it is immediate that the substrate-production rate is constant ($k_{\textnormal{in} }$) in time, but the substrate-degradation rate is time-varying and it is equal to $\gamma_c(t) S_c(t)$ at time $t$. Here the Poisson processes $Y_1$ and $Y_2$ capture the intermittency in the firing of production and degradation reactions. This intermittency becomes unimportant if the substrate is present in high copy-numbers \cite{KurtzLLn1} and in this case one can regard $( S_c(t))_{t \geq 0}$ as the dynamics of subtrate \emph{concentration}\footnote{The concentration of any species is its copy-number divided by the system volume.}, specified by the following ODE
\begin{linenomath*}
\begin{align}
\label{sub_ode}
\frac{ d S_c(t)}{d t} = k_{ \textnormal{in} } -  \gamma_c(t) S_c(t).
\end{align}
\end{linenomath*}
Note that even if the intermittency in production/degradation reactions is ignored and $( S_c(t))_{t \geq 0}$ is described by the ODE \eqref{sub_ode}, the process $( S_c(t))_{t \geq 0}$ is still \emph{stochastic} because it is driven by the stochastic process $( \gamma_c(t) )_{t  \geq 0 }$ that represents enzymatic fluctuations.

Let $m_c(t) = \E(S_c(t))$ for each $t \geq 0$. We shall soon see that $m_c(t)$ does not depend on whether we use representation \eqref{sub_rtc} or \eqref{sub_ode} for the substrate dynamics $( S_c(t) )_{t \geq 0}$. Our goal in this paper is to understand the role of fluctuations in the catalytic activity of enzyme ${\bf E}$ in determining the steady-state value of the mean
\begin{linenomath*}
\begin{align}
\label{steadystatemean}
m_\textnormal{eq}(c) = \lim_{t \to \infty} m_c(t).
\end{align}
\end{linenomath*}
In particular, we study how this steady-state mean $m_\textnormal{eq}(c)$ depends on the relative fluctuation speed $c$ and the variability in degradation rates $\gamma_1,\dots,\gamma_n$ at various enzymatic activity levels.

\subsection{Expressions for $m_{ \textnormal{eq} }(c)$: The general case } \label{sec:mainanalysis}

We can approximately find $m_\textnormal{eq}(c) $ by estimating $m_c(t) $ for a very large $t$, using simulations of the whole system. However this naive approach is highly unsatisfactory because these simulations can be computationally expensive and the approximation error incurred by replacing the steady-state mean by a finite-time mean is generally difficult to quantify. Moreover this approach does not provide us with an explicit formula for $m_\textnormal{eq}(c) $ that can enable us to study its dependence on the relative speed parameter $c$. In light of these difficulties, we look for alternative ways to compute $m_\textnormal{eq}(c)$. In this section we assume that enzymatic kinetics is given by a general stationary stochastic process with an arbitrary state-space $\Gamma \subset (0,\infty)$, and so we do not rely on the CTMC structure mentioned in Section \ref{sec:model}. We specialise the results of this section to the CTMC case in Section \ref{sec:mainanalysisctmc}.

Using representation \eqref{sub_rtc} or \eqref{sub_ode} we can show that $m_c(t) =  \E(S_c(t))$ is given by the following formula
\begin{linenomath*}
\begin{align}
\label{formulamt}
m_c(t)=    \E \left( S_c(0) e^{ -  \int_0^t \gamma_c(s)ds } \right)  +  k_{  \textnormal{in} } \int_{0}^t   \E\left( e^{ -  \int_s^{t} \gamma_c(u)du }  \right)  ds.
\end{align}
\end{linenomath*}
From the stationarity of the process $( \gamma_c(t) )_{t \geq 0}$ we can conclude that
\begin{linenomath*}
\begin{align*}
 \int_{0}^t   \E \left( e^{ -  \int_s^t \gamma_c(u)du }  \right)  ds  =  \int_{0}^t   \E  \left( e^{ -  \int_0^{t-s} \gamma_c(u)du }  \right)  ds =  \int_{0}^t   \E  \left( e^{ -  \int_0^{s} \gamma_c(u)du }  \right)ds.
\end{align*}
\end{linenomath*}
Substituting this in \eqref{formulamt} and letting $t \to \infty$, we obtain our first formula for $m_\textnormal{eq}(c)$, which is,
\begin{linenomath*}
\begin{align}
\label{mceqform1}
m_{ \textnormal{eq} }(c) = \lim_{t \to \infty}m_c(t)=   k_{ \textnormal{in} } \int_{0}^\infty   \E \left( e^{ -  \int_0^{s} \gamma_c(u)du }    \right) ds.
\end{align}
\end{linenomath*}
From \eqref{defngammac} we obtain
\begin{linenomath*}
\begin{align}
\label{quantity1}
\int_0^{s} \gamma_c(u)du = s \left( \frac{1}{cs} \int_0^{cs} \gamma(u)du \right).
\end{align}
\end{linenomath*}
Since the process $( \gamma(t) )_{t \geq 0}$ is stationary, from Theorem 10.6 in \cite{Kal} we know that as $c \to \infty$, the quantity \eqref{quantity1} converges a.s. to $s \E_\pi( \gamma )$ (recall \eqref{stationarymean:gamma}). As a consequence $ \E \left( e^{ -  \int_0^{s} \gamma_c(u)du }    \right)  \to e^{ -s \E_\pi( \gamma ) }$ and hence we get
\begin{linenomath*}
\begin{align}
\label{mdeterministic}
\lim_{c \to \infty} m_{ \textnormal{eq} }(c) =  k_{ \textnormal{in} } \int_{0}^\infty  e^{ -s \E_\pi( \gamma ) } ds = \frac{ k_{ \textnormal{in} } }{ \E_\pi( \gamma ) } := m^{ \textnormal{(det)} }_{ \textnormal{eq} }.
\end{align}
\end{linenomath*}
This shows that as the relative speed $c$ of enzymatic fluctuations approaches $\infty$, these fluctuations become \emph{equilibrated} at the timescale of substrate kinetics, and so they do not affect the mean substrate level. In other words, from the point of view of the substrate, the enzyme kinetics is so fast that it is as if the enzyme state is constant at the equilibrium level $\E_\pi( \gamma )$. This corresponds to the classical case where there is no dynamic disorder in the enzyme activity and so this activity is well-approximated by a deterministic rate constant for the substrate degradation reaction. As the mapping $x \mapsto e^{-x}$ is convex, Jensen's inequality tells us that
\begin{linenomath*}
\begin{align*}
 \E \left( e^{ -  \int_0^{s} \gamma_c(u)du }    \right) \geq  e^{  - \int_0^s \E ( \gamma_c(u) )du  } = e^{  -  s \E_\pi( \gamma ) },
\end{align*}
\end{linenomath*}
where the last relation follows from the fact that $( \gamma_c(t) )_{t \geq 0}$ is a stationary process with mean $  \E( \gamma_c(t) ) = \E_\pi( \gamma ) $ for all times $t \geq 0$. Substituting this in \eqref{mceqform1} we see that for any $c \geq 0$
\begin{linenomath*}
\begin{align}
\label{positivedeviations}
m_{ \textnormal{eq} }(c) \geq  k_{ \textnormal{in} } \int_{0}^\infty  e^{ -s\E_\pi( \gamma )} ds = \frac{ k_{ \textnormal{in} } }{   \E_\pi( \gamma )} = m^{ \textnormal{(det)} }_{ \textnormal{eq} }.
\end{align}
\end{linenomath*}
Therefore for a finite relative speed $c$, enzymatic fluctuations \emph{always amplify the mean substrate abundance}, in comparison to the classical deterministic case. The natural question that now arises is - how large should speed $c$ be in order for the deterministic approximation to be acceptable within a certain tolerance level $\epsilon$? We address this question in Section \ref{sec:approxformula}.

Let us now consider the situation where the relative speed parameter $c \to 0$ and so at the timescale of substrate kinetics, the enzyme dynamics $( \gamma_c(t) )_{t \geq 0} $ approaches a {\bf static} process, i.e. $\gamma_c(t) = \gamma(0)$ for all $t \geq 0$. This case corresponds to the situation where the enzyme kinetics is \emph{very slow} in comparison to the substrate kinetics. Hence from the point of view of the substrate, the kinetics of the enzyme is almost \emph{fixed}. In this regime, we can replace $\gamma_c(u)$ by $\gamma(0)$ in \eqref{mceqform1} to obtain
\begin{linenomath*}
\begin{align}
\label{mstatic}
\lim_{c \to 0} m_{ \textnormal{eq} }(c) =  k_{ \textnormal{in} }\E\left(   \int_{0}^\infty  e^{ -s  \gamma(0) } ds \right)=  k_{ \textnormal{in} } \E \left( \frac{1}{\gamma(0)} \right) = k_{ \textnormal{in} } \E_\pi\left( \frac{1}{\gamma} \right):= m^{ \textnormal{(static)} }_{ \textnormal{eq} },
\end{align}
\end{linenomath*}
where we have used the fact that $\gamma(0)$ has probability distribution $\pi$ to write $\E( 1/\gamma(0) )$ as $\E_{\pi}(1/\gamma)$. Observe that $m^{ \textnormal{(static)} }_{ \textnormal{eq} } \geq m^{ \textnormal{(det)} }_{ \textnormal{eq} }$, which can be readily seen by letting $c \to 0$ in \eqref{positivedeviations} or by directly using Jensen's inequality on the convex map $f(x) = 1/x$ (see Figure \ref{fig:gammadistortion}). The two extremal cases $c \to 0$ and $c \to \infty$ serve as a guide to the behavior of realistic systems with an intermediate value of $c$. In particular we can expect that for such intermediate $c$-values, the steady-state substrate mean will lie somewhere between $m^{ \textnormal{(det)} }_{ \textnormal{eq} }$ and  $m^{ \textnormal{(static)} }_{ \textnormal{eq} }$ . This is precisely what happens as we shall soon see. We will also discuss how the precise value of $m_\textnormal{eq}(c)$ can be computed or estimated from any Markovian model of enzymatic fluctuations.

\begin{figure}[!h]
\begin{center}
    \includegraphics[width=0.4\textwidth]{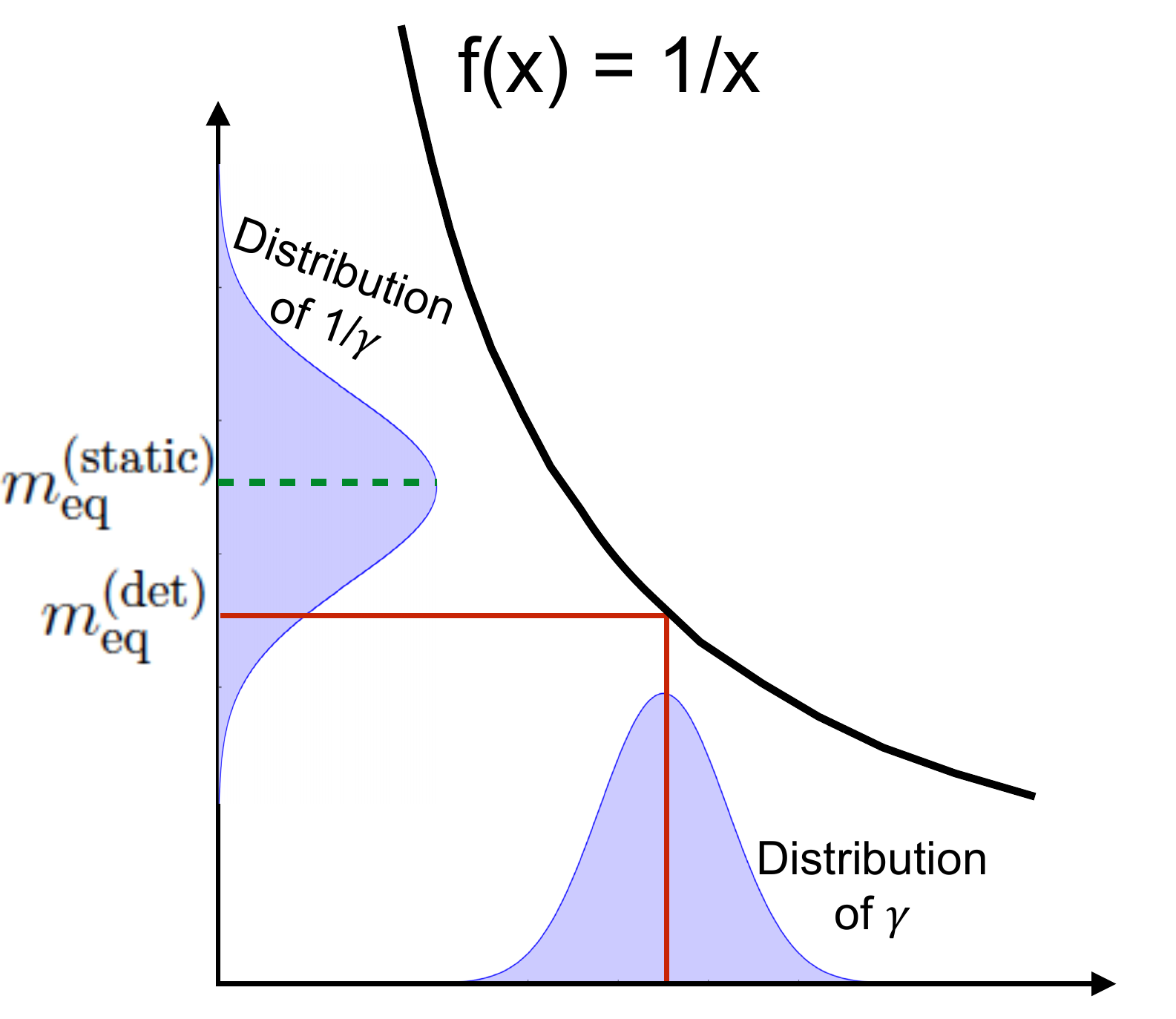}
\end{center}
    \caption{This figure illustrates how the convex map $f(x) = 1/x$ distorts the distribution of $\gamma$. Assuming $ k_{ \textnormal{in} }= 1$ here, the static mean $m^{ \textnormal{(static)} }_{ \textnormal{eq} } = \E_\pi( f(\gamma) ) = \E_{\pi}( 1/\gamma )$ is always greater than the deterministic mean $m^{ \textnormal{(det)} }_{ \textnormal{eq} } = f( \E_\pi( \gamma ) ) = 1/\E_{\pi}( \gamma )$.}
    \label{fig:gammadistortion}
\end{figure}

Until now, the conclusions we have drawn regarding $m_{ \textnormal{eq} }(c) $ rely on the formula \eqref{mceqform1} that holds for any real-valued stationary stochastic process $( \gamma_c(t) )_{t \geq 0}$ as long as its states are positive and bounded away from $0$, i.e. the state-space $\Gamma$ satisfies
\begin{linenomath*}
\begin{align}
\label{gamma_inf}
\inf\{x : x \in \Gamma \}  \geq \epsilon
\end{align}
\end{linenomath*}
for some $\epsilon >0$. This makes this formula very general but it is difficult to work with, because it involves an indefinite integral which is generally analytically intractable as the mapping $s \mapsto  \E \left( e^{ -  \int_0^{s} \gamma_c(u)du }    \right)$ does not have an explicit form. We remedy this problem in the next section by specialising this formula to the case where $( \gamma_c(t) )_{t \geq 0}$ is a finite state-space CTMC as described in Section \ref{sec:model}. Before we come to that, we provide a numerical recipe for statistically estimating $m_{ \textnormal{eq} }(c) $ without the need for evaluating the indefinite integral. This scheme is based on the assumption that we can efficiently generate sample-paths of the stationary process $( \gamma_c(t) )_{t \geq 0}$ (see \cite{GP,Asmussen}).

Define a random variable $\tau_c$ by
\begin{linenomath*}
\begin{align}
\label{defn:tau}
\tau_c = \inf\left\{ t \geq 0 :  \int_0^t \gamma_c(s)ds = - \ln u \right\},
\end{align}
\end{linenomath*}
where $u$ is an independent random variable with the uniform distribution on $[0,1]$. To sample $\tau_c$ we can adopt the following strategy. We first sample $u$ from the uniform distribution on $[0,1]$, draw an initial condition $\gamma_c(0)$ from $\pi$, and then simulate the sample path $( \gamma_c(t) )_{t \geq 0}$, keeping track of the integral $ \int_0^t \gamma_c(s)ds $. We take $\tau_c$ to be first time $t$ when this integral hits the value $(-\ln u)$. From the samples of the random variable $\tau_c$, we can estimate its expectation $\E( \tau_c )$ which gives us an estimate for $m_{ \textnormal{eq} }(c)$ because it can be shown that
\begin{linenomath*}
\begin{align}
\label{eqnetauc}
m_{ \textnormal{eq} } (c) = k_{ \textnormal{in} } \E( \tau_c ).
\end{align}
\end{linenomath*}
Note that the estimator for $m_{ \textnormal{eq} } (c)$ based on formula \eqref{eqnetauc} will be unbiased but it will suffer from statistical error due to a finite sample size. However this error can be estimated and managed far more easily than the error one would incur by approximating the steady-state mean $m_{ \textnormal{eq} } (c)$ by a finite-time mean $m_c(t)$ (recall \eqref{steadystatemean}). This makes this formula \eqref{eqnetauc} useful in practice (see Example \ref{example:sf}).

The results from this section are collected in our next proposition which is proved in the Supplementary Material.
\begin{proposition}
\label{prop_general}
Suppose $ ( \gamma(t) )_{t \geq 0 }$ is a real-valued stationary stochastic process with stationary distribution $\pi$ and state-space $\Gamma$ satisfying \eqref{gamma_inf}. Let $ ( \gamma_c(t) )_{t \geq 0}$ be the speed $c$ version of this process given by \eqref{defngammac} and define the substrate dynamics $( S_c(t) )_{t \geq 0 }$ either by \eqref{sub_rtc} or by \eqref{sub_ode}. Let $m_c(t) =\E (S_c(t))$ and let the steady-state limit $m_{ \textnormal{eq} }(c)$ be given by \eqref{steadystatemean}. Then we have the following:
\begin{itemize}
\item[(A)] The value $m_{ \textnormal{eq} }(c)$ is well-defined (i.e. the limit in \eqref{steadystatemean} exists) and it is given by \eqref{mceqform1}.
\item[(B)] If $\tau_c$ is the random variable defined by \eqref{defn:tau} then \eqref{eqnetauc} holds.
\item[(C)] The limits \eqref{mdeterministic} and \eqref{mstatic} are satisfied as $c \to \infty$ and $c \to 0$ respectively.
\end{itemize}
\end{proposition}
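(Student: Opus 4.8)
The unifying idea is to \emph{condition on the entire enzyme path} $\mathcal{F}^\gamma = \sigma(\gamma_c(u): u\geq 0)$, which renders the substrate dynamics a conditionally linear system for \emph{both} representations. First I would show that the conditional mean $\mu(t) := \E(S_c(t)\mid \mathcal{F}^\gamma)$ satisfies the \emph{same} linear ODE
\begin{linenomath*}
\begin{align*}
\frac{d\mu(t)}{dt} = k_{\textnormal{in}} - \gamma_c(t)\mu(t),
\end{align*}
\end{linenomath*}
whether $(S_c(t))_{t\geq 0}$ is given by \eqref{sub_ode} or by \eqref{sub_rtc}. For the ODE this is immediate. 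For the jump chain \eqref{sub_rtc} it follows by applying $\E(\cdot\mid\mathcal{F}^\gamma)$ and using that $Y_2(\int_0^t\gamma_c(u)S_c(u)du)$ has compensator $\int_0^t\gamma_c(u)S_c(u)du$, so that by conditional Fubini the degradation term contributes $\int_0^t\gamma_c(u)\mu(u)du$. Solving with the integrating factor $e^{\int_0^t\gamma_c(s)ds}$ gives the pathwise solution $\mu(t)=\mu(0)e^{-\int_0^t\gamma_c}+k_{\textnormal{in}}\int_0^t e^{-\int_s^t\gamma_c(u)du}ds$; taking a further expectation and noting $\E(\mu(0)e^{-\int_0^t\gamma_c})=\E(S_c(0)e^{-\int_0^t\gamma_c})$ (the exponential factor is $\mathcal{F}^\gamma$-measurable) yields \eqref{formulamt}. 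This simultaneously proves that $m_c(t)$ is identical for the two representations.

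For part (A), I would start from \eqref{formulamt}, rewrite the second integral via stationarity exactly as in the text, and let $t\to\infty$. The crucial analytic input is the lower bound \eqref{gamma_inf}: since $\gamma_c(u)\geq\epsilon$ we have $e^{-\int_0^s\gamma_c(u)du}\leq e^{-\epsilon s}$, integrable on $[0,\infty)$. This both guarantees convergence of the improper integral in \eqref{mceqform1} and forces the initial-condition term to vanish, since $\E(S_c(0)e^{-\int_0^t\gamma_c})\leq \E(S_c(0))e^{-\epsilon t}\to 0$ whenever $\E(S_c(0))<\infty$. Hence the limit \eqref{steadystatemean} exists and equals \eqref{mceqform1}.

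Part (B) is the cleanest. Condition on $\mathcal{F}^\gamma$ and note that, by \eqref{gamma_inf}, the additive functional $A_c(s):=\int_0^s\gamma_c(u)du$ is continuous, strictly increasing and unbounded, so the $\tau_c$ of \eqref{defn:tau} is well defined with $\{\tau_c>s\}=\{A_c(s)<-\ln u\}=\{u<e^{-A_c(s)}\}$. As $u$ is uniform on $[0,1]$ and independent of $\mathcal{F}^\gamma$, the conditional survival function is $\P(\tau_c>s\mid\mathcal{F}^\gamma)=e^{-A_c(s)}$, whence $\P(\tau_c>s)=\E(e^{-\int_0^s\gamma_c(u)du})$. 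Integrating the survival function gives $\E(\tau_c)=\int_0^\infty\P(\tau_c>s)\,ds=\int_0^\infty\E(e^{-\int_0^s\gamma_c})\,ds$, which is precisely \eqref{mceqform1} divided by $k_{\textnormal{in}}$, i.e.\ \eqref{eqnetauc}.

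Part (C) recapitulates the two limiting computations \eqref{mdeterministic} and \eqref{mstatic}, the only remaining work being to pass the limit through both the inner expectation and the outer integral. As $c\to\infty$, the ergodic theorem (Thm.~10.6 in \cite{Kal}) gives $A_c(s)/s=\tfrac1{cs}\int_0^{cs}\gamma(v)dv\to\E_\pi(\gamma)$ a.s.; as $c\to0$, right-continuity of the paths gives $\tfrac1{cs}\int_0^{cs}\gamma(v)dv\to\gamma(0)$, so $A_c(s)\to s\gamma(0)$ a.s. In each case the integrand is dominated by $e^{-\epsilon s}$ uniformly in $c$, so dominated convergence applies twice and delivers \eqref{mdeterministic} and \eqref{mstatic}. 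I expect the main obstacle to be the rigorous justification of \eqref{formulamt} for the jump-chain representation \eqref{sub_rtc}: one must first verify that $\E(S_c(t))<\infty$ for all $t$ so that the compensator identity and the conditional Fubini interchange are legitimate, before concluding that $\mu(t)$ solves the linear ODE. Establishing this a-priori moment bound, e.g.\ via a Dynkin/Gronwall estimate exploiting the constant inflow and the bounded-below outflow rate, is the one genuinely technical point; everything else reduces to elementary calculus and dominated convergence.
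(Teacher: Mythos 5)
Your proposal is correct and follows essentially the same route as the paper's proof: conditioning on the full enzyme path to obtain the linear ODE for the conditional mean, solving with the integrating factor $e^{\int_0^t\gamma_c(s)ds}$, applying Fubini and stationarity to reach \eqref{formulamt} and then \eqref{mceqform1}, computing $\E(\tau_c)$ by integrating the conditional survival function $e^{-\int_0^t\gamma_c(s)ds}$, and using the ergodic theorem together with dominated convergence (via the bound $e^{-\epsilon s}$ from \eqref{gamma_inf}) for the two limits in part (C). Your additional remarks on the compensator identity and the a priori finiteness of $\E(S_c(t))$ address a technical point the paper's proof passes over silently, but they do not change the argument.
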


\subsection{Expressions for $m_{ \textnormal{eq} }(c)$: The finite CTMC case} \label{sec:mainanalysisctmc}

In this section we specialize expression \eqref{mceqform1} to the case where $( \gamma(t) )_{t \geq 0}$ is a stationary CTMC with a finite state-space $\Gamma = \{ \gamma_1,\dots,\gamma_n \}$ as described in Section \ref{sec:model}. Define the CTMC $ ( \gamma_c(t) )_{t \geq 0}$ by \eqref{defngammac} and recall that its $n  \times n$ transition-rate matrix is given by $Q_c = c Q$. Let $D$ be the $n \times n$ diagonal matrix
\begin{linenomath*}
\begin{align}
\label{diagmatrixD}
D = \textnormal{Diag}(\gamma_1,\dots,\gamma_n)
\end{align}
\end{linenomath*}
whose entries are the degradation rates at different enzymatic states or abundance levels. One of the main results in our paper is to show that $m_{ \textnormal{eq} }(c) $ can be expressed
as
\begin{linenomath*}
\begin{align}
\label{mceqform2}
m_{ \textnormal{eq} }(c) = k_{ \textnormal{in} } \left[ \pi^T (D -c Q)^{-1} {\bf 1} \right].
\end{align}
\end{linenomath*}
Two alternative proofs of this result are given in the Supplementary Material. The first proof exploits some ideas from the theory of occupation measures for Markov chains \cite{GuptaEJP} while the second proof is based on the \emph{Methods of Conditional Moments} (MCM) approach recently developed by Hasenauer et al. \cite{Hasenauer13}. Note that this formula assumes matrix $(D -c Q)$ is invertible for any $c \geq 0$ but this can be easily verified from the properties of matrix $Q$. Using formula \eqref{mceqform2} we can prove that for any $c \geq 0$
\begin{linenomath*}
\begin{align}
\label{sandwichprop}
 m^{ \textnormal{(det)} }_{ \textnormal{eq} }  \leq m_{ \textnormal{eq} }(c)  \leq  m^{ \textnormal{(static)} }_{ \textnormal{eq} }.
\end{align}
\end{linenomath*}
Therefore for any relative speed $c$ of enzymatic fluctuations, the steady-state mean of the substrate is always sandwiched between the values obtained for the {\bf deterministic} and the {\bf static} cases. Moreover since $m_{ \textnormal{eq} }(c) $ depends continuously on $c$, the limits \eqref{mstatic} and \eqref{mdeterministic} imply that for any value $m^*$ in the open interval
$(  m^{ \textnormal{(det)} }_{ \textnormal{eq} } , m^{ \textnormal{(static)} }_{ \textnormal{eq} })$ there exists a relative speed value $c^* > 0$ such that
$m_{ \textnormal{eq} }(c^*) = m^*$. Hence the positive deviations caused by enzymatic fluctuations (in the mean substrate abundance) range from $0$ to \emph{exactly} $(m^{ \textnormal{(static)} }_{ \textnormal{eq} } - m^{ \textnormal{(det)} }_{ \textnormal{eq} } )$.

To detemine the map $c \mapsto m_{ \textnormal{eq} }(c)$ we need to evaluate $m_{ \textnormal{eq} }(c)$ at several values of $c$. This can be difficult with formula \eqref{mceqform2} because each evaluation requires inversion of a potentially large matrix. Fortunately we can resolve this issue using simple ideas from the \emph{theory of resolvents} for linear operators \cite{Kato}, as we now describe. Let $\tilde{Q} = D^{-1} Q$ be the transition rate matrix of another CTMC over state-space $\Gamma = \{ \gamma_1,\dots, \gamma_n \}$. The difference between this new CTMC and the original CTMC is that the rates of outflow from state $\gamma_i$ to each state $\gamma_j$ (for $j \neq i$) are divided by the state value $\gamma_i$. Let $\C$ denote the field of complex numbers. The resolvent for the Markov semigroup corresponding to this CTMC is the matrix-valued function over $\C$ defined by
\begin{linenomath*}
\begin{align}
\label{defn_resolv}
R(z) = (zI - \tilde{Q})^{-1},
\end{align}
\end{linenomath*}
where $I$ is the $n \times n$ Identity matrix. This function is well-defined for any $z$ which is not an eigenvalue of matrix $\tilde{Q}$. Let $\lambda_1,\dots,\lambda_n$ be the $n$ eigenvalues of matrix $\tilde{Q}$, repeated according to their algebraic multiplicity. Since $\tilde{Q}$ is the transition rate matrix of a CTMC, it has a simple\footnote{An eigenvalue is said to be simple if its algebraic multiplicity is $1$.} eigenvalue (say $\lambda_1$) equal to $0$, while its other eigenvalues have negative real parts. This implies that the resolvent function $R$ is well-defined on the positive real line $(0,\infty)$.

From now on we assume that matrix $\tilde{Q}$ is diagonalizable\footnote{A square matrix $M$ is diagonalizable if it can be written as $M = P \Lambda P^{-1}$ for some diagonal matrix $\Lambda$ and some invertible matrix $P$. The diagonal entries of $D$ are the eigenvalues of matrix $M$.} over the field $\C$ of complex numbers. This assumption is not very restrictive because almost every matrix is diagonalisable (see \cite{LinAl}) and so if $\tilde{Q}$ is not diagonalisable, we can perturb matrix $Q$ slightly to make $\tilde{Q}$ diagonalisable and not affect the enzyme dynamics significantly. The diagonalizability of $\tilde{Q}$ allows us to write matrix $\tilde{Q}$ as $\tilde{Q} = U \Lambda U^{-1}$,
where $\Lambda = \textnormal{Diag}( \lambda_1,\dots,\lambda_n)$ and $U$ is an invertible matrix whose columns contain the right eigenvectors for matrix $\tilde{Q}$ corresponding to the eigenvalues $\lambda_1,\dots,\lambda_n$. Similarly the rows of $U^{-1}$ contain the left eigenvectors for matrix $\tilde{Q}$ corresponding to the eigenvalues $\lambda_1,\dots,\lambda_n$. Let $u_i$ and $w_i$ be $n \times 1$ vectors denoting the $i$-th column and $i$-th row of matrices $U$ and $U^{-1}$ respectively. Therefore
\begin{linenomath*}
\begin{align}
\label{qtildeexpansion}
\tilde{Q} = \sum_{i=1}^n \lambda_i u_i w^T_i
\end{align}
\end{linenomath*}
and we can express the resolvent function $R$ (see Chapter 5 in \cite{Kato}) as
\begin{linenomath*}
\begin{align}
\label{defn_resolv2}
R(z) = \sum_{i=1}^n  \left( \frac{1}{z - \lambda_i } \right) u_i w^T_i
\end{align}
\end{linenomath*}
Let $ \langle \cdot, \cdot \rangle$ denote the standard inner product on $\R^n$. Note that formula \eqref{mceqform2} can be expressed as
\begin{linenomath*}
\begin{align*}
m_{ \textnormal{eq} }(c) = k_{ \textnormal{in} } \left[ \pi^T (I -c \tilde{Q})^{-1} D^{-1}{\bf 1} \right] = c^{-1} k_{ \textnormal{in} } \left[ \pi^T R(c^{-1}) D^{-1}{\bf 1} \right]
\end{align*}
\end{linenomath*}
for any $c >0$. Plugging $R(c^{-1})$ from \eqref{defn_resolv2} and defining
\begin{linenomath*}
\begin{align}
\label{defn_weights}
\alpha_i =  \langle \pi, u_i \rangle  \langle w_i,  D^{-1}{\bf 1}  \rangle = (\pi^T u_i) ( w^T_i  D^{-1}{\bf 1}  )  \qquad \textnormal{ for each }\qquad i=1,\dots,n
\end{align}
\end{linenomath*}
we obtain the following formula for $m_{ \textnormal{eq} }(c)$:
\begin{linenomath*}
\begin{align}
\label{mainmceqformula}
m_{ \textnormal{eq} }(c) = k_{ \textnormal{in} } \sum_{i=1}^n  \left( \frac{ \alpha_i }{1 - c \lambda_i } \right).
\end{align}
\end{linenomath*}
Observe that since $\alpha_i$-s and $\lambda_i$-s are independent of $c$, they only need to be computed once to construct this expression and then we can easily compute $m_{ \textnormal{eq} }(c)$ for several values of $c$ without the need of evaluating the matrix inverses in \eqref{mceqform2}. Moreover if $n$ is large, then using the values of $\alpha_i$ and $\lambda_i$ as a guide, one can derive suitable approximations of the formula \eqref{mainmceqformula} for $m_{ \textnormal{eq} }(c)$. We derive one such approximation in the next section and use it as a tool to further understand the phenomenon of stochastic amplification induced by dynamic disorder in enzymatic activity.

The results from this section are collected in our next theorem which is proved in the Supplementary Material.
\begin{theorem}
\label{theorem_general}
Suppose $ ( \gamma(t) )_{t \geq 0 }$ is a stationary CTMC with transition rate matrix $Q$, stationary distribution $\pi$ and state-space $\Gamma = \{ \gamma_1,\dots,\gamma_n \}$ (see Section \ref{sec:model}). Let $ ( \gamma_c(t) )_{t \geq 0}$ be the speed $c$ version of this process given by \eqref{defngammac} and define the substrate dynamics $( S_c(t) )_{t \geq 0 }$ either by \eqref{sub_rtc} or by \eqref{sub_ode}. Let the steady-state substrate mean $m_{ \textnormal{eq} }(c)$ be given by \eqref{steadystatemean} and the diagonal matrix $D$ be defined by \eqref{diagmatrixD}. Then we have the following:
\begin{itemize}
\item[(A)] The matrix $(D -c Q)$ is invertible and $m_{ \textnormal{eq} }(c)$ can be expressed as \eqref{mceqform2}.
\item[(B)] Suppose the matrix $\tilde{Q} = D^{-1} Q$ is diagonalizable and let $\lambda_1,\dots,\lambda_n$ be its eigenvalues. For each $i=1,\dots,n$ define $\alpha_i$ by \eqref{defn_weights}. Then $m_{ \textnormal{eq} }(c)$ can be expressed as \eqref{mainmceqformula}.
\item[(C)] The relation \eqref{sandwichprop} is satisfied for any $c \geq 0$.
\end{itemize}
\end{theorem}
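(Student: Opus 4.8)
The plan is to build all three parts on the general identity \eqref{mceqform1}, specialised to the finite CTMC, so that the only genuinely new ingredient is a Feynman--Kac representation of the functional $\E\bigl(e^{-\int_0^s\gamma_c(u)\,du}\bigr)$.

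For part (A), I would first condition on the initial state and set $f_i(s)=\E\bigl(e^{-\int_0^s\gamma_c(u)\,du}\mid\gamma_c(0)=\gamma_i\bigr)$. Since $(\gamma_c(t))_{t\ge0}$ is a CTMC with generator $cQ$ and the integrand is a multiplicative functional with state-dependent ``potential'' $\gamma_i$, the vector $f(s)=(f_1(s),\dots,f_n(s))^T$ solves the backward (Feynman--Kac) equation $f'(s)=(cQ-D)f(s)$ with $f(0)=\mathbf{1}$, hence $f(s)=e^{(cQ-D)s}\mathbf{1}$. Averaging over the stationary law gives $\E\bigl(e^{-\int_0^s\gamma_c}\bigr)=\pi^T e^{(cQ-D)s}\mathbf{1}$. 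To integrate this against $s$ as in \eqref{mceqform1} I need $\int_0^\infty e^{(cQ-D)s}\,ds$ to converge; because every killing rate satisfies $\gamma_i\ge\epsilon>0$, the sub-Markovian semigroup obeys the entrywise bound $0\le e^{(cQ-D)s}\le e^{-\epsilon s}$, which yields absolute convergence and, by the standard identity for stable matrices, $\int_0^\infty e^{(cQ-D)s}\,ds=-(cQ-D)^{-1}=(D-cQ)^{-1}$. The same bound shows $cQ-D$ is stable, so $(D-cQ)$ is invertible (alternatively, $D-cQ$ is a $Z$-matrix that is strictly diagonally dominant with row sums $\gamma_i>0$, hence a nonsingular $M$-matrix). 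Substituting into \eqref{mceqform1} gives \eqref{mceqform2}.

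Part (B) is then linear algebra on \eqref{mceqform2}. I would factor $D-cQ=D(I-c\tilde{Q})$ with $\tilde{Q}=D^{-1}Q$, so that $(D-cQ)^{-1}=(I-c\tilde{Q})^{-1}D^{-1}$ and $m_{\textnormal{eq}}(c)=k_{\textnormal{in}}\,\pi^T(I-c\tilde{Q})^{-1}D^{-1}\mathbf{1}$. For $c>0$ I rewrite $(I-c\tilde{Q})^{-1}=c^{-1}R(c^{-1})$ in terms of the resolvent \eqref{defn_resolv}, insert the spectral expansion \eqref{defn_resolv2}, and use $c^{-1}/(c^{-1}-\lambda_i)=1/(1-c\lambda_i)$; collecting the scalars $(\pi^T u_i)(w_i^T D^{-1}\mathbf{1})=\alpha_i$ from \eqref{defn_weights} produces \eqref{mainmceqformula}. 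The denominators never vanish because $\lambda_1=0$ and $\operatorname{Re}\lambda_i<0$ for $i\ge 2$, so $\operatorname{Re}(1-c\lambda_i)\ge 1$ for $c\ge0$; the case $c=0$ follows by continuity (and directly recovers $\sum_i\alpha_i=\E_\pi(1/\gamma)$).

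For part (C), the lower bound in \eqref{sandwichprop} is already \eqref{positivedeviations}, so only the upper bound $m_{\textnormal{eq}}(c)\le m^{\textnormal{(static)}}_{\textnormal{eq}}$ is new, and I would again argue from \eqref{mceqform1}, this time by a \emph{second} application of Jensen's inequality \emph{along the sample path}. Writing $\tfrac1s\int_0^s\gamma_c(u)\,du$ as a uniform average of $\{\gamma_c(u):u\in[0,s]\}$ and applying Jensen to the convex map $y\mapsto e^{-sy}$ gives the pathwise bound $e^{-\int_0^s\gamma_c(u)du}\le \tfrac1s\int_0^s e^{-s\gamma_c(u)}\,du$. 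Taking expectations and using stationarity of $\gamma_c$ (so $\gamma_c(u)\sim\pi$ for every $u$) collapses the right-hand side to $\E_\pi(e^{-s\gamma})$, whence $\E\bigl(e^{-\int_0^s\gamma_c}\bigr)\le\E_\pi(e^{-s\gamma})$; integrating in $s$ and invoking \eqref{mstatic} yields $m_{\textnormal{eq}}(c)\le k_{\textnormal{in}}\E_\pi(1/\gamma)=m^{\textnormal{(static)}}_{\textnormal{eq}}$. The routine parts are the resolvent algebra of (B) and the two Jensen steps of (C); the main obstacle is the justification underlying (A) --- establishing the Feynman--Kac identity $\E\bigl(e^{-\int_0^s\gamma_c}\bigr)=\pi^T e^{(cQ-D)s}\mathbf{1}$ rigorously and pinning down convergence and invertibility through the stability of $cQ-D$ --- after which everything else reduces to substitution into \eqref{mceqform1}.
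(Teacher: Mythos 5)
Your proposal is correct. Parts (A) and (B) follow essentially the paper's route: the paper derives $\E\bigl(e^{-\int_0^t\gamma_c}\bigr)={\bf 1}^T e^{(cQ^T-D)t}\pi$ via the \emph{forward} equation for the occupation-type quantities $\beta_i(t)=\E\bigl(\ind_{\{\gamma_c(t)=\gamma_i\}}e^{-\int_0^t\gamma_c(s)ds}\bigr)$ (citing Proposition 4.1 of \cite{GuptaEJP}, with a second derivation via the method of conditional moments), whereas you use the \emph{backward} Feynman--Kac equation for $f_i(s)=\E\bigl(e^{-\int_0^s\gamma_c}\mid\gamma_c(0)=\gamma_i\bigr)$; these are transposes of one another and both land on \eqref{mceqform2}, and your two invertibility arguments (sub-Markovian decay bound; strict diagonal dominance of the $Z$-matrix $D-cQ$ with row sums $\gamma_i>0$) are both sound, matching the paper's appeal to Seneta. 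The resolvent algebra in (B) is identical to the paper's. Where you genuinely diverge is part (C): the paper proves the upper bound $m_{\textnormal{eq}}(c)\le m^{(\textnormal{static})}_{\textnormal{eq}}$ by a matrix-positivity argument --- setting $M=\Pi(D-cQ)^{-1}-\Pi D^{-1}$ and showing $M+M^T$ is a symmetric Metzler matrix annihilating the positive vector $(\gamma_1,\dots,\gamma_n)$, hence nonpositive definite --- while you obtain it by a pathwise Jensen inequality, $e^{-\int_0^s\gamma_c(u)du}\le\frac1s\int_0^s e^{-s\gamma_c(u)}du$, followed by stationarity and Fubini. Your argument is shorter, more elementary, and notably more general: it requires only stationarity of $(\gamma_c(t))_{t\ge0}$ and the bound \eqref{gamma_inf}, so it upgrades the upper half of \eqref{sandwichprop} to the setting of Proposition \ref{prop_general}, whereas the paper's proof is tied to the finite CTMC structure; what the paper's linear-algebraic route buys in exchange is the explicit nonnegativity of $(D-cQ)^{-1}$ and the quadratic-form machinery that is reused elsewhere in the finite-state analysis.
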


\subsection{Approximate formula for $m_{ \textnormal{eq} }(c)$} \label{sec:approxformula}

The goal of this section is to derive an approximate formula for $m_{ \textnormal{eq} }(c)$ using \eqref{mainmceqformula} and then use it to obtain some interesting insights. Recall from the previous section that $\lambda_1,\dots,\lambda_n$ are the eigenvalues of matrix $\tilde{Q}$. Among these $\lambda_1 = 0$ while the eigenvalues $\lambda_2,\dots,\lambda_n$ have negative real parts. Define a positive constant $\epsilon_{ \textnormal{max} }$ by
\begin{linenomath*}
\begin{align*}
\epsilon_{ \textnormal{max} } = - \max\{ \textnormal{Re}( \lambda_i ) : i=2,\dots,n  \},
\end{align*}
\end{linenomath*}
where $\textnormal{Re}(z)$ denotes the real part of a complex number $z$. Setting $\lambda_1 = 0$ in \eqref{mainmceqformula} we obtain
\begin{linenomath*}
\begin{align}
\label{resolventformula}
m_{ \textnormal{eq} }(c) = k_{ \textnormal{in} } \left[  \alpha_1 +   \sum_{i=2}^n  \left( \frac{ \alpha_i }{1 - c \lambda_i } \right)   \right].
\end{align}
\end{linenomath*}
This formula is valid for any $c$ in the interval $(  -\epsilon_{ \textnormal{max} }  ,\infty)$ and its form shows that the function $m_{ \textnormal{eq} }(c)$ is \emph{real-analytic}\footnote{A function is called real analytic at a point if it is infinitely differentiable at that point and it agrees with its Taylor series expansion around that point.} at $c=0$. Therefore all the information about function $m_{ \textnormal{eq} }(c)$ is contained in the value of this function and its derivatives at $c=0$.

Using limits \eqref{mdeterministic} and \eqref{mstatic} we can conclude that
\begin{linenomath*}
\begin{align}
\label{alpharelations}
\alpha_1=  \frac{ m^{ \textnormal{(det)} }_{ \textnormal{eq} } }{ k_{ \textnormal{in} } }   \qquad \textnormal{and} \qquad \sum_{i=2}^n  \alpha_i = \left( \frac{m^{ \textnormal{(static)} }_{ \textnormal{eq} } - m^{ \textnormal{(det)} }_{ \textnormal{eq} } }{k_{ \textnormal{in} }}\right).
\end{align}
\end{linenomath*}
%
Let $\theta$ denote the following weighted combination of eigenvalues $\lambda_2,\dots,\lambda_n$
\begin{linenomath*}
\begin{align}
\label{defn_hatlambda}
\theta = -  \frac{  \sum_{i=2}^n \lambda_i \alpha_i }{\sum_{i=2}^n \alpha_i}.
\end{align}
\end{linenomath*}
We now propose an approximate formula for $m_{ \textnormal{eq} }(c)$
\begin{linenomath*}
\begin{align}
\label{mceqapprox}
\hat{m}_{ \textnormal{eq} }(c) = m^{ \textnormal{(det)} }_{ \textnormal{eq} } +   \left(  \frac{m^{ \textnormal{(static)} }_{ \textnormal{eq} } - m^{ \textnormal{(det)} }_{ \textnormal{eq} }}{1 + c \theta}  \right) .
\end{align}
\end{linenomath*}
Note this formula is much easier to use than \eqref{resolventformula} because it contains only one rational term. From \eqref{alpharelations} it is immediate that $\hat{m}_{ \textnormal{eq} }(c)$ also obeys the limits \eqref{mdeterministic} and \eqref{mstatic}. Moreover it is straightforward to check that the first derivatives of $\hat{m}_{ \textnormal{eq} }(c) $ and $m_{ \textnormal{eq} }(c)$ match at $c = 0$. Hence the approximation error is given by a difference of second-order derivatives and we explain in Supplementary Material why this error is likely to be small. We can also view the approximation $\hat{m}_{ \textnormal{eq} }(c) $ of $m_{ \textnormal{eq} }(c)$ as replacing a weighted \emph{arithmetic mean} of several quantities with the corresponding \emph{harmonic mean}. To see this note that from \eqref{resolventformula} and \eqref{alpharelations} we can express $m_{ \textnormal{eq} }(c) $ as
\begin{linenomath*}
\begin{align}
\label{newresolvent}
m_{ \textnormal{eq} }(c) = m^{ \textnormal{(det)} }_{ \textnormal{eq} } +  \left(  m^{ \textnormal{(static)} }_{ \textnormal{eq} } - m^{ \textnormal{(det)} }_{ \textnormal{eq} } \right) \bar{x},
\end{align}
\end{linenomath*}
where
\begin{linenomath*}
\begin{align*}
\bar{x} =  \frac{ \sum_{i=2}^n  \alpha_i  (1 - c \lambda_i)^{-1}  }{ \sum_{i=2}^n  \alpha_i  }
\end{align*}
\end{linenomath*}
is the weighted arithmetic mean of quantities $(1 - c \lambda_2)^{-1},\dots,(1 - c \lambda_n)^{-1}$ with weights $\alpha_2,\dots,\alpha_n$\footnote{These weights may not be positive real numbers, as is customary in the definition of arithmetic means. However in our examples we generally find that the most \emph{significant} weights indeed have a positive real part and a negligible imaginary part.}. The corresponding weighted harmonic mean of these quantities is given by
\begin{linenomath*}
\begin{align*}
\hat{x} = \frac{ \sum_{i=2}^n  \alpha_i  }{  \sum_{i=2}^n \alpha_i (1 - c \lambda_i) } = \frac{1}{1 + c \theta }
\end{align*}
\end{linenomath*}
and observe that $\hat{m}_{ \textnormal{eq} }(c)$ can be expressed as the r.h.s. of \eqref{newresolvent} with arithmetic mean $\bar{x}$ replaced by the harmonic mean $\hat{x}$.

 We now illustrate the accuracy of this approximation using a couple of randomly generated $n \times n$, transition rate matrices $Q$ with $n=5$ and $n=10$ respectively. In both cases we choose the input rate to be $k_{ \textnormal{in} }  =1$ and the enzymatic state-values to be $\gamma_i = i$ for $1,2,\dots,n$. The exact function $\hat{m}_{ \textnormal{eq} }(c) $ along with its approximation $\hat{m}_{ \textnormal{eq} }(c)$ are plotted in Figure \ref{fig:approximation}. The accuracy of this approximation can be easily seen. Notice that the exact function is \emph{slightly above} its approximation. Assuming the significant weights ($\alpha_i$-s) are positive  reals, this can be explained by the fact that arithmetic mean is always higher than the corresponding harmonic mean.

From \eqref{mceqapprox} it is immediate that the shape of the function $\hat{m}_{ \textnormal{eq} }(c) $ depends crucially on the parameter $\theta $ computed according to \eqref{defn_hatlambda}. We now examine $\theta $ more closely and see how it is connected to an existing notion from the theory of Markov processes. Let us denote the numerator of \eqref{defn_hatlambda} by
\begin{linenomath*}
\begin{align}
\label{defn_theta}
\Theta = - \sum_{i=2}^n \lambda_i \alpha_i.
\end{align}
\end{linenomath*}
Since $\alpha_i$-s are given by \eqref{defn_weights}, using \eqref{qtildeexpansion}, $\tilde{Q} = D^{-1}Q$ and $\lambda_1 = 0$ we can express $\Theta$ as
\begin{linenomath*}
\begin{align}
\label{connect_dirform1}
\Theta = -\sum_{i=1}^n \lambda_i \alpha_i  = -\pi^T  \left( \sum_{i=1}^n \lambda_i u_i w^T_i \right)  D^{-1} {\bf 1} = -\pi^T \tilde{Q}  D^{-1} {\bf 1} = -\pi^T D^{-1}  Q  D^{-1} {\bf 1}.
\end{align}
\end{linenomath*}
This relation shows that $\Theta$ (and hence $\theta $) is always real-valued even though some $\lambda_i$-s or $\alpha_i$-s may have imaginary parts. Moreover to compute $\Theta$ we do not need to compute the eigenvalues $\lambda_1,\dots,\lambda_n$ of a potentially large matrix $\tilde{Q}$. Instead we only need to evaluate the expression $\pi^T D^{-1}  Q  D^{-1} {\bf 1}$ which is computationally much easier. Interestingly the definition of $\Theta$ coincides with the well-known notion of \emph{Dirichlet forms}, that is extensively used in the study of mixing properties of Markov processes \cite{Saloff-Coste,PeresLevin}. We now discuss this connection in more detail.

Consider the CTMC $( \gamma(t) )_{t \geq 0}$ with state-space $\Gamma = \{ \gamma_1,\dots, \gamma_n\}$ and transition rate matrix $Q = [q_{ij}]$. The generator $\mathbb{Q}$\footnote{The generator of a Markov process is an operator which specifies the rate of change of the distribution of the process. For more details see Chapter 4 in \cite{EK}.} of this CTMC maps any real-valued function $f$ on $\Gamma$ to another such real-valued function $\mathbb{Q}f$ given by
\begin{linenomath*}
\begin{align*}
\mathbb{Q} f( \gamma_i ) = \sum_{j  \neq i} q_{ij} ( f( \gamma_j ) - f( \gamma_i ) ).
\end{align*}
\end{linenomath*}
Define a function $f : \Gamma \to (0,\infty)$ by $f( \gamma ) = 1/\gamma$. Then one can see that $\Theta$ \eqref{defn_theta} can be expressed as
\begin{linenomath*}
\begin{align}
\label{thetaisdif}
\Theta = - \E_{\pi} ( f(\gamma)\mathbb{Q} f( \gamma)  ).
\end{align}
\end{linenomath*}
In other words, if a $\Gamma$-valued random variable $\gamma$ has distribution $\pi$, then $\Theta$ is the expectation of the random variable $( - f(\gamma)\mathbb{Q} f( \gamma) )$. Relation \eqref{thetaisdif} shows that $\Theta$ is a Dirichlet form associated with the Markovian semigroup generated by $\mathbb{Q}$ (see \cite{Saloff-Coste}). An important consequence of this connection is that $\Theta$ is always positive (see Lemma 2.1.2 in \cite{Saloff-Coste}) irrespective of the entries of the rate matrix $Q$ or the state values $\gamma_1,\dots,\gamma_n$. The positivity of $\Theta$ implies that $\theta $ is also positive and hence the mapping $c \mapsto \hat{m}_{ \textnormal{eq} }(c) $ is convex and monotonically decreasing from $m^{ \textnormal{(static)} }_{ \textnormal{eq} }$ at $c =0$ to $m^{ \textnormal{(det)} }_{ \textnormal{eq} }$ as $c \to \infty$. Intuitively the magnitude of Dirichlet form $\Theta$ corresponds to the \emph{mixing strength} of the underlying Markov process. Therefore as $\Theta$ increases, $\theta $ also increases and the mapping $c \mapsto \hat{m}_{ \textnormal{eq} }(c) $ has a sharper ``drop" to the deterministic value $m^{ \textnormal{(det)} }_{ \textnormal{eq} }$. Our next goal is to make this mathematically precise and quantitatively estimate the relative speed-values $c$ beyond which the deterministic assumption is acceptable.

In the rest of this section, our object of interest will be the \emph{relative stochastic amplification factor} defined by
\begin{linenomath*}
\begin{align}
\label{defn_stochampl}
\rho(c) = \frac{m_{ \textnormal{eq} }(c) - m^{ \textnormal{(det)} }_{ \textnormal{eq} }  }{ m^{ \textnormal{(det)} }_{ \textnormal{eq} } } ,
\end{align}
\end{linenomath*}
which measures the difference of steady-state substrate means in the presence and absence of enzymatic fluctuations, \emph{normalized} by the the steady-state substrate mean in the deterministic case. Note that $\rho(c)$ does not depend on the input rate $k_{ \textnormal{in} }$ and using \eqref{sandwichprop} we can see that $\rho(c)$ satisfies
\begin{linenomath*}
\begin{align}
\label{defn_rho_max}
0 \leq \rho(c) \leq \rho_{ \textnormal{max} } :=  \frac{m^{ \textnormal{(static)} }_{ \textnormal{eq} }  }{m^{ \textnormal{(det)} }_{ \textnormal{eq} }} - 1 = \E_\pi\left( \frac{1}{\gamma} \right) \E_\pi( \gamma ) - 1 \quad \textnormal{for any} \quad c \geq 0.
\end{align}
\end{linenomath*}
In order to study the dependence of $\rho(c)$ on $c$, we now look at its approximation $\hat{\rho}(c)$ which is defined analogously to \eqref{defn_stochampl}, with $m_{ \textnormal{eq} }(c)$ replaced by $\hat{m}_{ \textnormal{eq} }(c)$. Using \eqref{defn_hatlambda}, \eqref{defn_theta} and \eqref{alpharelations} we see that $\theta$ is the same as the \emph{normalized} Dirichlet form defined by
\begin{linenomath*}
\begin{align}
\label{thetaisdif2}
\theta = \frac{ \Theta  \E_\pi( \gamma ) }{ \rho_{ \textnormal{max} }   }.
\end{align}
\end{linenomath*}
Substituting $\hat{ \lambda }$ by $\theta$ in \eqref{mceqapprox} and dividing by $m^{ \textnormal{(det)} }_{ \textnormal{eq} }$, we obtain the following formula after some simple algebraic manipulations
\begin{linenomath*}
\begin{align}
\label{main_ampl_formula}
\hat{\rho}(c) = \frac{  \rho_{ \textnormal{max} }  }{ 1 + c\theta  } .
\end{align}
\end{linenomath*}
This formula clearly indicates that as $\theta$ gets larger, the amplification factor decreases more sharply to $1$ with the relative speed parameter $c$. One can regard $ \rho(c) \approx \hat{\rho}(c)$ as the ``relative error" between the actual substrate mean and the mean computed with deterministic assumption on the enzymatic kinetics. From relation \eqref{main_ampl_formula} it is immediate that in order to \emph{test} if this error will exceed some tolerance level $\epsilon >0$ we just need to check if the relative enzyme speed $c$ is smaller than the threshold $c_{\epsilon}$ defined by
\begin{linenomath*}
\begin{align}
\label{cthreshold}
c_\epsilon:=  \left( \frac{  \rho_{ \textnormal{max} }  -\epsilon  }{ \theta \epsilon }  \right).
\end{align}
\end{linenomath*}
We can expect this test to be rather conservative because as we have argued before, the exact values $m_{ \textnormal{eq} }(c)$ will usually lie above their approximation $\hat{m}_{ \textnormal{eq} }(c)$.

Note that $c_\epsilon$ is inversely proportional to $\theta$ but directly proportional to $\rho_{ \textnormal{max} }$. The first parameter $\theta$ is the normalized Dirichlet form and it captures the ``mixing strength" of the underlying enzymatic dynamics (see Example \ref{example:twostate}), while the second parameter $\rho_{ \textnormal{max} }$ can be viewed as a proxy for the variance of the stationary-distribution $\pi$\footnote{To see this note that $\rho_{ \textnormal{max} }$ defined by \eqref{defn_rho_max} represents the ``error" in Jensen's inequality for the convex map $x \mapsto 1/x$. It can be easily shown that this error is proportional to the variance of the distribution $\pi$ (see \cite{Costarelli2015} for instance).}(see Example \ref{example:twostate}). Generally both these parameters will increase with higher levels \emph{noise} in the enzymatic dynamics. However since they affect $c_\epsilon$ in opposing ways, it is difficult to ascertain the overall effect of dynamical noise in setting the threshold value $c_\epsilon$. We explore this issue in greater detail in Section \ref{example:sf} and numerically show that increasing levels of dynamical noise in the enzymatic kinetics of that reaction network gives rise to decreasing values of $c_\epsilon$. This is surprising and counterintuitive because it suggests that this dynamical noise is actually \emph{beneficial} in improving the accuracy of the deterministic assumption for the enzyme activity.

Finally we remark that even though most of the analysis in this paper assumes that enzymatic kinetics is described by a finite Markov chain, the formulas we derive can provide insights for a more general class of stationary stochastic processes. This is because finite Markov chains can serve as good approximations of such processes \cite{Pincus}. Moreover if the process is Markov, even with an arbitrary state-space, we can compute expression \eqref{main_ampl_formula} for $\hat{\rho}(c)$ by sampling its stationary distribution and using this sample to estimate $\rho_{ \textnormal{max} }$ and the normalized Dirichlet form $\theta$. We illustrate this for the example network in Section \ref{example:sf} where the enzyme dynamics follows a Markov process over a countable state-space.

\begin{figure}[!h]
\begin{center}
    \includegraphics[width=1\textwidth]{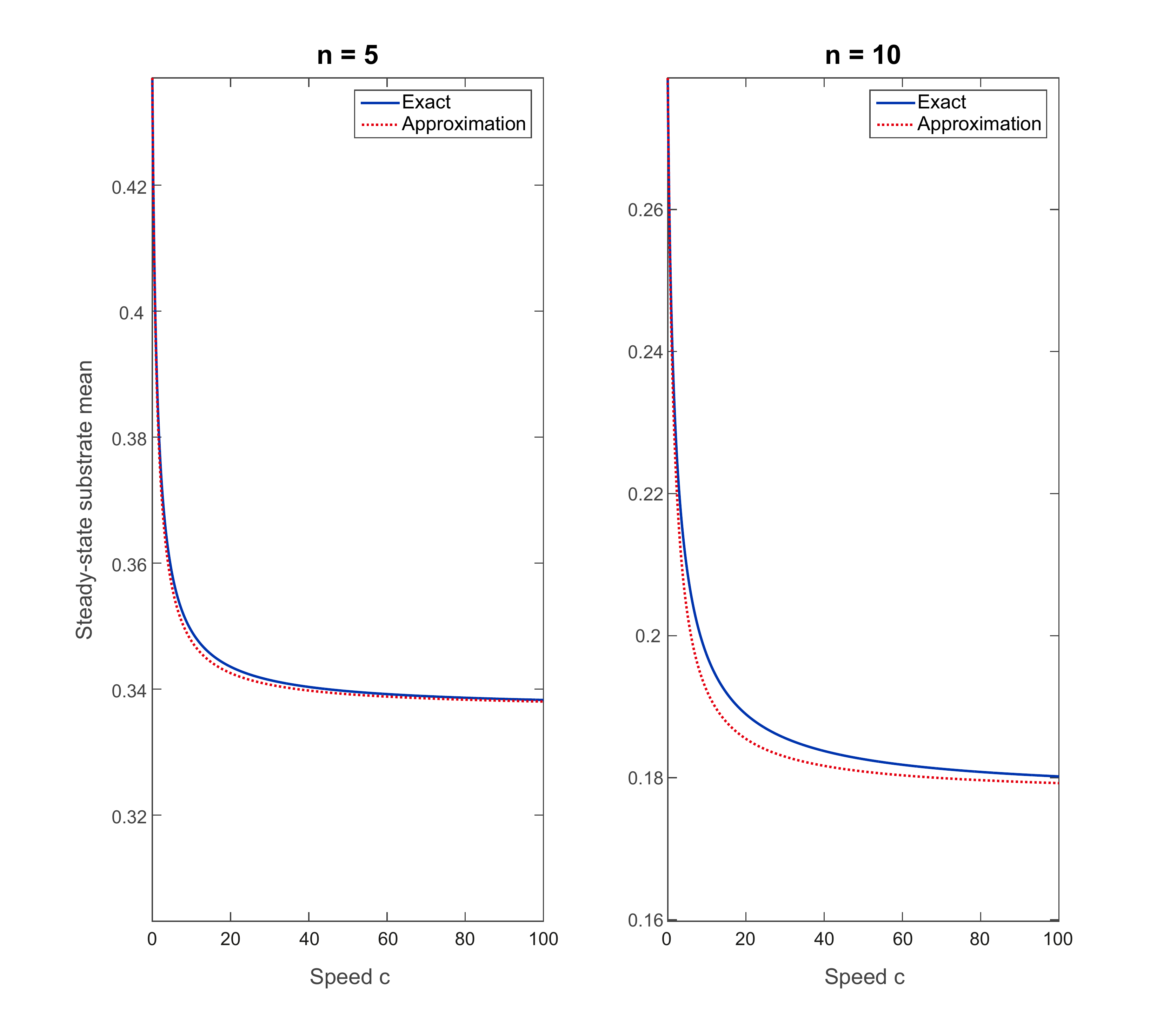}
\end{center}
    \caption{Comparison of the exact steady-state mean substrate value $m_{ \textnormal{eq} }(c)$ with its approximation $\hat{m}_{ \textnormal{eq} }(c) $}
    \label{fig:approximation}
\end{figure}

From \eqref{mceqapprox} it is immediate that the shape of the function $\hat{m}_{ \textnormal{eq} }(c) $ depends crucially on the parameter $\theta $ computed according to \eqref{defn_hatlambda}. We now examine $\theta $ more closely and see how it is connected to an existing notion from the theory of Markov processes. Let us denote the numerator of \eqref{defn_hatlambda} by
\begin{linenomath*}
\begin{align}
\label{defn_theta}
\Theta = - \sum_{i=2}^n \lambda_i \alpha_i.
\end{align}
\end{linenomath*}
Since $\alpha_i$-s are given by \eqref{defn_weights}, using \eqref{qtildeexpansion}, $\tilde{Q} = D^{-1}Q$ and $\lambda_1 = 0$ we can express $\Theta$ as
\begin{linenomath*}
\begin{align}
\label{connect_dirform1}
\Theta = -\sum_{i=1}^n \lambda_i \alpha_i  = -\pi^T  \left( \sum_{i=1}^n \lambda_i u_i w^T_i \right)  D^{-1} {\bf 1} = -\pi^T \tilde{Q}  D^{-1} {\bf 1} = -\pi^T D^{-1}  Q  D^{-1} {\bf 1}.
\end{align}
\end{linenomath*}
This relation shows that $\Theta$ (and hence $\theta $) is always real-valued even though some $\lambda_i$-s or $\alpha_i$-s may have imaginary parts. Moreover to compute $\Theta$ we do not need to compute the eigenvalues $\lambda_1,\dots,\lambda_n$ of a potentially large matrix $\tilde{Q}$. Instead we only need to evaluate the expression $\pi^T D^{-1}  Q  D^{-1} {\bf 1}$ which is computationally much easier. Interestingly the definition of $\Theta$ coincides with the well-known notion of \emph{Dirichlet forms}, that is extensively used in the study of mixing properties of Markov processes \cite{Saloff-Coste,PeresLevin}. We now discuss this connection in more detail.

Consider the CTMC $( \gamma(t) )_{t \geq 0}$ with state-space $\Gamma = \{ \gamma_1,\dots, \gamma_n\}$ and transition rate matrix $Q = [q_{ij}]$. The generator $\mathbb{Q}$\footnote{The generator of a Markov process is an operator which specifies the rate of change of the distribution of the process. For more details see Chapter 4 in \cite{EK}.} of this CTMC maps any real-valued function $f$ on $\Gamma$ to another such real-valued function $\mathbb{Q}f$ given by
\begin{linenomath*}
\begin{align*}
\mathbb{Q} f( \gamma_i ) = \sum_{j  \neq i} q_{ij} ( f( \gamma_j ) - f( \gamma_i ) ).
\end{align*}
\end{linenomath*}
Define a function $f : \Gamma \to (0,\infty)$ by $f( \gamma ) = 1/\gamma$. Then one can see that $\Theta$ \eqref{defn_theta} can be expressed as
\begin{linenomath*}
\begin{align}
\label{thetaisdif}
\Theta = - \E_{\pi} ( f(\gamma)\mathbb{Q} f( \gamma)  ).
\end{align}
\end{linenomath*}
In other words, if a $\Gamma$-valued random variable $\gamma$ has distribution $\pi$, then $\Theta$ is the expectation of the random variable $( - f(\gamma)\mathbb{Q} f( \gamma) )$. Relation \eqref{thetaisdif} shows that $\Theta$ is a Dirichlet form associated with the Markovian semigroup generated by $\mathbb{Q}$ (see \cite{Saloff-Coste}). An important consequence of this connection is that $\Theta$ is always positive (see Lemma 2.1.2 in \cite{Saloff-Coste}) irrespective of the entries of the rate matrix $Q$ or the state values $\gamma_1,\dots,\gamma_n$. The positivity of $\Theta$ implies that $\theta $ is also positive and hence the mapping $c \mapsto \hat{m}_{ \textnormal{eq} }(c) $ is convex and monotonically decreasing from $m^{ \textnormal{(static)} }_{ \textnormal{eq} }$ at $c =0$ to $m^{ \textnormal{(det)} }_{ \textnormal{eq} }$ as $c \to \infty$. Intuitively the magnitude of Dirichlet form $\Theta$ corresponds to the \emph{mixing strength} of the underlying Markov process. Therefore as $\Theta$ increases, $\theta $ also increases and the mapping $c \mapsto \hat{m}_{ \textnormal{eq} }(c) $ has a sharper ``drop" to the deterministic value $m^{ \textnormal{(det)} }_{ \textnormal{eq} }$. Our next goal is to make this mathematically precise and quantitatively estimate the relative speed-values $c$ beyond which the deterministic assumption is acceptable.

In the rest of this section, our object of interest will be the \emph{relative stochastic amplification factor} defined by
\begin{linenomath*}
\begin{align}
\label{defn_stochampl}
\rho(c) = \frac{m_{ \textnormal{eq} }(c) - m^{ \textnormal{(det)} }_{ \textnormal{eq} }  }{ m^{ \textnormal{(det)} }_{ \textnormal{eq} } } ,
\end{align}
\end{linenomath*}
which measures the difference of steady-state substrate means in the presence and absence of enzymatic fluctuations, \emph{normalized} by the the steady-state substrate mean in the deterministic case. Note that $\rho(c)$ does not depend on the input rate $k_{ \textnormal{in} }$ and using \eqref{sandwichprop} we can see that $\rho(c)$ satisfies
\begin{linenomath*}
\begin{align}
\label{defn_rho_max}
0 \leq \rho(c) \leq \rho_{ \textnormal{max} } :=  \frac{m^{ \textnormal{(static)} }_{ \textnormal{eq} }  }{m^{ \textnormal{(det)} }_{ \textnormal{eq} }} - 1 = \E_\pi\left( \frac{1}{\gamma} \right) \E_\pi( \gamma ) - 1 \quad \textnormal{for any} \quad c \geq 0.
\end{align}
\end{linenomath*}
In order to study the dependence of $\rho(c)$ on $c$, we now look at its approximation $\hat{\rho}(c)$ which is defined analogously to \eqref{defn_stochampl}, with $m_{ \textnormal{eq} }(c)$ replaced by $\hat{m}_{ \textnormal{eq} }(c)$. Using \eqref{defn_hatlambda}, \eqref{defn_theta} and \eqref{alpharelations} we see that $\theta$ is the same as the \emph{normalized} Dirichlet form defined by
\begin{linenomath*}
\begin{align}
\label{thetaisdif2}
\theta = \frac{ \Theta  \E_\pi( \gamma ) }{ \rho_{ \textnormal{max} }   }.
\end{align}
\end{linenomath*}
Dividing \eqref{mceqapprox} by $m^{ \textnormal{(det)} }_{ \textnormal{eq} }$, we obtain the following formula after some simple algebraic manipulations
\begin{linenomath*}
\begin{align}
\label{main_ampl_formula}
\hat{\rho}(c) = \frac{  \rho_{ \textnormal{max} }  }{ 1 + c\theta  } .
\end{align}
\end{linenomath*}
This formula clearly indicates that as $\theta$ gets larger, the amplification factor decreases more sharply to $1$ with the relative speed parameter $c$. One can regard $ \rho(c) \approx \hat{\rho}(c)$ as the ``relative error" between the actual substrate mean and the mean computed with deterministic assumption on the enzymatic kinetics. From relation \eqref{main_ampl_formula} it is immediate that in order to \emph{test} if this error will exceed some tolerance level $\epsilon >0$ we just need to check if the relative enzyme speed $c$ is smaller than the threshold $c_{\epsilon}$ defined by
\begin{linenomath*}
\begin{align}
\label{cthreshold}
c_\epsilon:=  \left( \frac{  \rho_{ \textnormal{max} }  -\epsilon  }{ \theta \epsilon }  \right).
\end{align}
\end{linenomath*}
We can expect this test to be rather conservative because as we have argued before, the exact values $m_{ \textnormal{eq} }(c)$ will usually lie above their approximation $\hat{m}_{ \textnormal{eq} }(c)$.

Note that $c_\epsilon$ is inversely proportional to $\theta$ but directly proportional to $\rho_{ \textnormal{max} }$. The first parameter $\theta$ is the normalized Dirichlet form and it captures the ``mixing strength" of the underlying enzymatic dynamics (see Example \ref{example:twostate}), while the second parameter $\rho_{ \textnormal{max} }$ can be viewed as a proxy for the variance of the stationary-distribution $\pi$\footnote{To see this note that $\rho_{ \textnormal{max} }$ defined by \eqref{defn_rho_max} represents the ``error" in Jensen's inequality for the convex map $x \mapsto 1/x$. It can be easily shown that this error is proportional to the variance of the distribution $\pi$ (see \cite{Costarelli2015} for instance).}(see Example \ref{example:twostate}). Generally both these parameters will increase with higher levels \emph{noise} in the enzymatic dynamics. However since they affect $c_\epsilon$ in opposing ways, it is difficult to ascertain the overall effect of dynamical noise in setting the threshold value $c_\epsilon$. We explore this issue in greater detail in Section \ref{example:sf} and numerically show that increasing levels of dynamical noise in the enzymatic kinetics of that reaction network gives rise to decreasing values of $c_\epsilon$. This is surprising and counterintuitive because it suggests that this dynamical noise is actually \emph{beneficial} in improving the accuracy of the deterministic assumption for the enzyme activity.

Finally we remark that even though most of the analysis in this paper assumes that enzymatic kinetics is described by a finite Markov chain, the formulas we derive can provide insights for a more general class of stationary stochastic processes. This is because finite Markov chains can serve as good approximations of such processes \cite{Pincus}. Moreover if the process is Markov, even with an arbitrary state-space, we can compute expression \eqref{main_ampl_formula} for $\hat{\rho}(c)$ by sampling its stationary distribution and using this sample to estimate $\rho_{ \textnormal{max} }$ and the normalized Dirichlet form $\theta$. We illustrate this for the example network in Section \ref{example:sf} where the enzyme dynamics follows a Markov process over a countable state-space.

\section{Examples}   \label{sec:numexampl}

In this section we present a couple of examples to illustrate our results. Our first example of a two-state switching enzyme is such that all calculations can be easily done analytically allowing us to clearly understand the stochastic amplification effect. We also see how enzymes can utilize their fluctuations to serve as high-gain amplifiers. Our second example is the reaction network of Paulsson et al. \cite{Paulsson00} which displays stochastic focusing. We apply our results to this network and demonstrate that in some cases dynamical fluctuations can actually be beneficial in reducing the unwanted stochastic amplification effects.

\subsection{Stochastic amplification induced by a two-state switching enzyme} \label{example:twostate}

Consider a simple instance of the system in Figure \ref{fig:model}, in which a single enzyme molecule is present and can fluctuate between two states of activity: with enzyme ${\bf E}$ in the low-activity (``0'') state, the degradation rate of substrate ${\bf S}$ is assumed to be $\gamma_0$, while it is equal to $\gamma_1$ when ${\bf E}$ is highly active (``1''). The 0-to-1 and 1-to-0 rates are given by $k_{ \textnormal{on} }$ and $k_{ \textnormal{off} }$ respectively. A schematic representation of this model is presented in Figure \ref{fig:twostatemodel}.
\begin{figure}[h]
\begin{center}
    \includegraphics[width=0.6\textwidth]{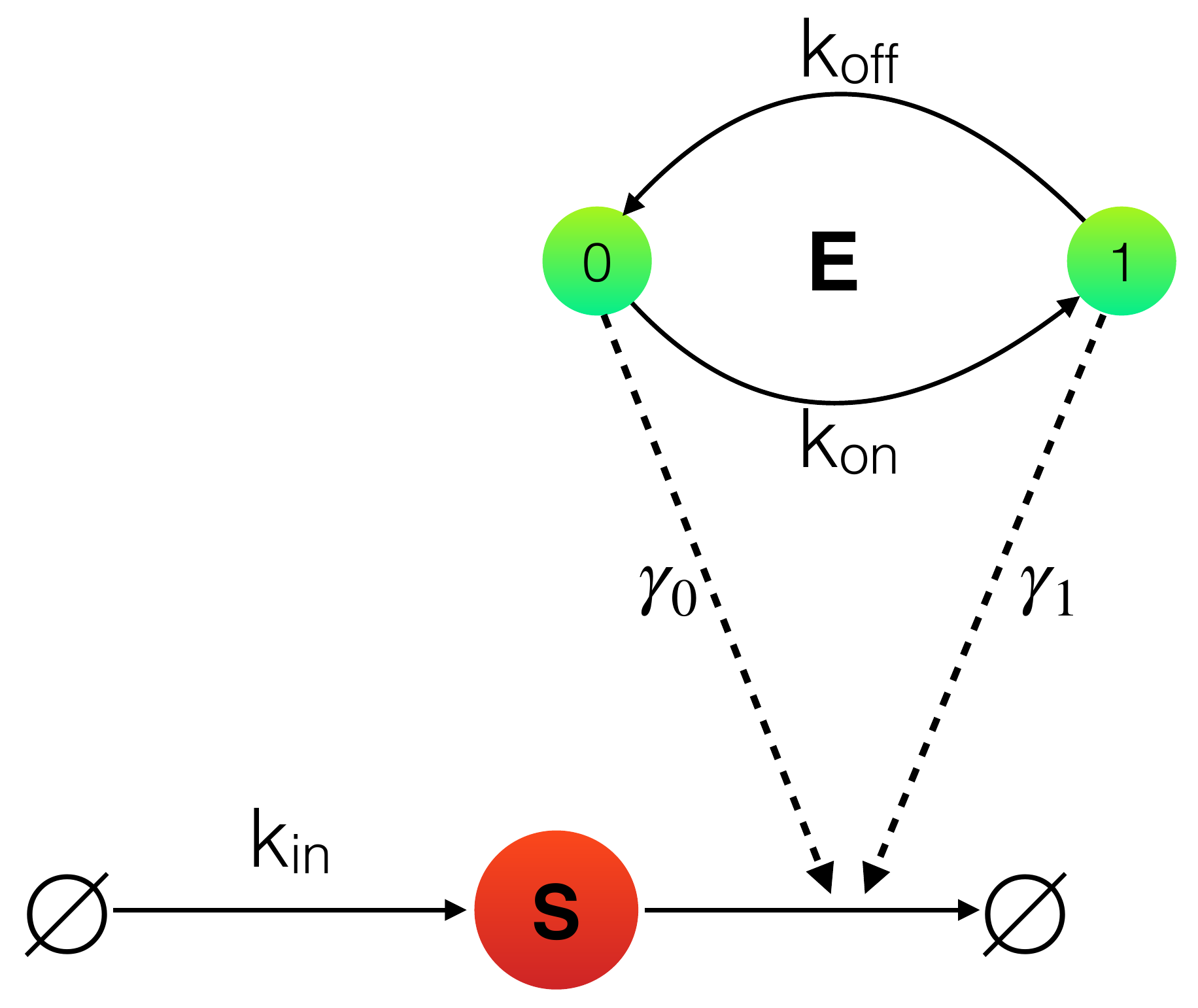}
\end{center}
    \caption{Schematic of the two-state switching enzyme model}
    \label{fig:twostatemodel}
\end{figure}

The time-varying degradation rate $( \gamma(t) )_{t \geq 0}$ induced by this fluctuating enzyme ${\bf E}$  is a CTMC with state-space $\Gamma = \{ \gamma_0, \gamma_1 \}$ and transition-rate matrix
\begin{linenomath*}
\begin{align*}
Q = \left[
\begin{array}{cc}
-k_{ \textnormal{on} }  & k_{ \textnormal{on} }\\
k_{ \textnormal{off} } & -k_{ \textnormal{off} }
\end{array} \right].
\end{align*}
\end{linenomath*}
One can check that the unique stationary distribution $\pi = (\pi_0, \pi_1)$ for this CTMC is simply given by
\begin{linenomath*}
\begin{align}
\label{defn_stdis_twostateenyme}
\pi_0 = \frac{k_{ \textnormal{off} } }{k_{ \textnormal{on} }+k_{ \textnormal{off} }} \qquad \textnormal{and} \qquad  \pi_1 = \frac{k_{ \textnormal{on} } }{k_{ \textnormal{on} }+k_{ \textnormal{off} }}.
\end{align}
\end{linenomath*}
For each $i=0,1$, we can regard $\pi_i$ as the steady-state probability of the enzyme being in state $i$. Due to the Ergodic Theorem (see Theorem 10.6 in \cite{Kal}) we can also view $\pi_i$ as the proportion of time that the enzyme spends in state $i$ in the long-run. Let $\gamma$ be a $\Gamma$-valued random variable with probability distribution $\pi$. Then its mean and variance can be computed as
\begin{linenomath*}
\begin{align}
\label{twostateenzyme:mean_and_var}
\E_\pi( \gamma ) = \left( \frac{k_{ \textnormal{off} } \gamma_0 +  k_{ \textnormal{on} }  \gamma_1  }{k_{ \textnormal{on} }+k_{ \textnormal{off} }} \right) \qquad \textnormal{and} \qquad \textnormal{Var}_\pi( \gamma ) =  \frac{  k_{ \textnormal{on} }  k_{ \textnormal{off} }   }{ (k_{ \textnormal{on} }+k_{ \textnormal{off} } )^2  }  ( \gamma_0  - \gamma_1 )^2.
\end{align}
\end{linenomath*}

Suppose that the speed of enzymatic kinetics relative to the substrate is $c$ and so the degradation rate is given by the process $( \gamma_c(t) )_{t  \geq 0 }$ defined by \eqref{defngammac}. Let $m_{ \textnormal{eq} }(c)$ \eqref{steadystatemean} be the steady-state substrate mean in this case. Using part (A) of Theorem \ref{theorem_general} we obtain
\begin{linenomath*}
\begin{align*}
m_{ \textnormal{eq} }(c) =  k_{ \textnormal{in} }
[
\begin{array}{cc}
\pi_0 & \pi_1
\end{array}
]
\left[
\begin{array}{cc}
\gamma_0 + c k_{ \textnormal{on} } & -  c k_{ \textnormal{on} }\\
 - c k_{ \textnormal{off} } &  \gamma_1 + c k_{ \textnormal{off} }
\end{array}
\right]^{-1}
\left[
\begin{array}{c}
1 \\
1
\end{array}
\right].
\end{align*}
\end{linenomath*}
This formula involves the inverse of a $2 \times 2$ matrix, which can be easily computed explicitly. Substituting this inverse along with the expressions for $\pi_0$ and $\pi_1$ (see \eqref{defn_stdis_twostateenyme}) we get
\begin{linenomath*}
\begin{align}
\label{main_formula_two_enz}
m_{ \textnormal{eq} }(c) & = \frac{ k_{ \textnormal{in} } }{  \gamma_0 \gamma_1 + c \gamma_0k_{ \textnormal{off} }  + c \gamma_1 k_{ \textnormal{on} }   }
[
\begin{array}{cc}
\pi_0 & \pi_1
\end{array}
]
\left[
\begin{array}{cc}
 \gamma_1 + c k_{ \textnormal{off} }  &   c k_{ \textnormal{on} }\\
  c k_{ \textnormal{off} } &  \gamma_0 + c k_{ \textnormal{on} }
\end{array}
\right]
\left[
\begin{array}{c}
1 \\
1
\end{array}
\right] \notag \\
& = \frac{ k_{ \textnormal{in} } }{  ( \gamma_0 \gamma_1 + c \gamma_0k_{ \textnormal{off} }  + c \gamma_1 k_{ \textnormal{on} }  )  (  k_{ \textnormal{on} }  +   k_{ \textnormal{off} } )}
[
\begin{array}{cc}
k_{ \textnormal{off} }   & k_{ \textnormal{on} }
\end{array}
]
\left[
\begin{array}{c}
 \gamma_1 +    c ( k_{ \textnormal{on} }  +   k_{ \textnormal{off} } ) \\
 \gamma_0 +    c ( k_{ \textnormal{on} }  +   k_{ \textnormal{off} } )
\end{array}
\right] \notag \\
& = k_{ \textnormal{in} } \left[ \frac{   \gamma_1 k_{ \textnormal{off} }  + \gamma_0 k_{ \textnormal{on} }  +    c ( k_{ \textnormal{on} }  +   k_{ \textnormal{off} } )^2  }{  ( \gamma_0 \gamma_1 + c \gamma_0k_{ \textnormal{off} }  + c \gamma_1 k_{ \textnormal{on} }  )  (  k_{ \textnormal{on} }  +   k_{ \textnormal{off} } )} \right].
\end{align}
\end{linenomath*}

It is interesting to point our the formal similarity of \eqref{main_formula_two_enz} with the formula for the mean transfer time of a relaxation process whose rate is modeled by a two-state CTMC \cite{goychuk2005rate}.
From \eqref{main_formula_two_enz} it can be readily seen that the steady-state substrate means in the static and deterministic cases are given by
\begin{linenomath*}
\begin{align*}
m_{ \textnormal{eq} }^{ \textnormal{(det)} } &= \lim_{c \to \infty}  m_{ \textnormal{eq} }(c) = k_{ \textnormal{in} } \left[ \frac{   k_{ \textnormal{on} }  +   k_{ \textnormal{off} }   }{   \gamma_0k_{ \textnormal{off} } +  \gamma_1 k_{ \textnormal{on} }  } \right]  \\
\textnormal{and} \quad  m_{ \textnormal{eq} }^{ \textnormal{(static)} } &= \lim_{c \to 0} m_{ \textnormal{eq} }(c) = k_{ \textnormal{in} } \left[ \frac{  \gamma_1 k_{ \textnormal{off} }  + \gamma_0 k_{ \textnormal{on} }    }{ \gamma_0 \gamma_1  (  k_{ \textnormal{on} }  +   k_{ \textnormal{off} } ) } \right].
\end{align*}
\end{linenomath*}
Hence we can compute the maximum relative amplification factor $\rho_{ \textnormal{max} }$ \eqref{defn_rho_max} as
\begin{linenomath*}
\begin{align}
\label{defn_rho_max_twostateenz}
\rho_{ \textnormal{max} } &= \frac{m_{ \textnormal{eq} }^{ \textnormal{(static)} }  }{m_{ \textnormal{eq} }^{ \textnormal{(det)} } } - 1 =  \frac{ (  \gamma_0k_{ \textnormal{off} } +  \gamma_1 k_{ \textnormal{on} })  (  \gamma_1 k_{ \textnormal{off} }  + \gamma_0 k_{ \textnormal{on} }   ) }{  (  k_{ \textnormal{on} }  +   k_{ \textnormal{off} } )^2  \gamma_0 \gamma_1} -1  = \frac{ k_{ \textnormal{on} }   k_{ \textnormal{off} }  ( \gamma_0 - \gamma_1 )^2  }{ (  k_{ \textnormal{on} }  +   k_{ \textnormal{off} } )^2  \gamma_0 \gamma_1  }.
\end{align}
\end{linenomath*}
Recall the formula for $\textnormal{Var}_\pi( \gamma )$ from \eqref{twostateenzyme:mean_and_var} and oberve that $\rho_{ \textnormal{max} }$ can be expressed as
\begin{linenomath*}
\begin{align*}
\rho_{ \textnormal{max} }  = \frac{ \textnormal{Var}_\pi(  \gamma ) }{\gamma_0  \gamma_1},
\end{align*}
\end{linenomath*}
which reinforces the point we made in Section \ref{sec:approxformula} that $\rho_{ \textnormal{max} }$ serves as a proxy for the variance of the stationary distribution.

The Dirichlet form $\Theta$ \eqref{thetaisdif} for this CTMC is given by
\begin{linenomath*}
\begin{align*}
\Theta = - [
\begin{array}{cc}
\pi_0 & \pi_1
\end{array}
]
\left[
\begin{array}{cc}
\frac{1}{\gamma_0} &0 \\
0 & \frac{1}{\gamma_1}
\end{array}
\right]
Q
\left[
\begin{array}{cc}
\frac{1}{\gamma_0} &0 \\
0 & \frac{1}{\gamma_1}
\end{array}
\right]
\left[
\begin{array}{c}
1 \\
1
\end{array}
\right] = \left( \frac{\gamma_0  -  \gamma_1}{  \gamma_0 \gamma_1 } \right)^2 \frac{ k_{ \textnormal{on} }   k_{ \textnormal{off} }   }{   k_{ \textnormal{on} }  +   k_{ \textnormal{off} }  }.
\end{align*}
\end{linenomath*}
This yields the following formula for the normalized Dirichlet form $\theta$ \eqref{thetaisdif2}
\begin{linenomath*}
\begin{align*}
\theta = \frac{\Theta \E_\pi(\gamma)}{ \rho_{ \textnormal{max} } } =   \frac{k_{ \textnormal{off} } \gamma_0 +  k_{ \textnormal{on} }  \gamma_1  }{ \gamma_0  \gamma_1 }
\end{align*}
\end{linenomath*}
which determines the shape of our approximate formula $\hat{m}_{ \textnormal{eq} }(c)$ \eqref{mceqapprox} for the steady-state substrate mean. It is straightforward to check that for this example, this approximate formula is \emph{exact} because the expression \eqref{main_formula_two_enz} for $m_{ \textnormal{eq} }(c)$ can be written as
\begin{linenomath*}
\begin{align*}
m_{ \textnormal{eq} }(c) = m_{ \textnormal{eq} }^{ \textnormal{(det)} }  + \left( \frac{  m_{ \textnormal{eq} }^{ \textnormal{(static)} }  - m_{ \textnormal{eq} }^{ \textnormal{(det)} }  }{1+\theta c} \right).
\end{align*}
\end{linenomath*}
Note that $\theta$ is a measure of the mixing strength of the enzymatic kinetics and so it is not surprising that it increases linearly with the transition rates $k_{ \textnormal{on} }$ and $k_{ \textnormal{off} }$.

Define the relative amplification factor by \eqref{defn_stochampl}. It can be exactly expressed as
\begin{linenomath*}
\begin{align*}
\rho(c) = \frac{  \rho_{ \textnormal{max} } }{1+\theta c} = \frac{ k_{ \textnormal{on} }   k_{ \textnormal{off} }  ( \gamma_0 - \gamma_1 )^2  }{ (  k_{ \textnormal{on} }  +   k_{ \textnormal{off} } )^2  ( \gamma_0 \gamma_1 + c \gamma_0k_{ \textnormal{off} }  + c \gamma_1 k_{ \textnormal{on} }  )   }. 
\end{align*}
\end{linenomath*}
We now consider the situation when the degradation rate induced by the enzyme {\bf E} in the low-activity state (``0") is negligible. In this case $\gamma_0 \approx 0$ and $\rho(c)$ simplifies to
\begin{linenomath*}
\begin{align*}
\rho(c) \approx \left( \frac{   k_{ \textnormal{off} }  \gamma_1  }{ (  k_{ \textnormal{on} }  +   k_{ \textnormal{off} } )^2    } \right) \frac{1}{c} = \left( \frac{   \pi_0 \gamma_1  }{ k_{ \textnormal{on} }  +   k_{ \textnormal{off} }  } \right) \frac{1}{c},
\end{align*}
\end{linenomath*}
which shows that the relative amplification factor is proportional to $1/c$ and the proportionality constant is simply the product of the proportion of time ($\pi_0$) the enzyme spends in the low-activity state, the degradation rate ($\gamma_1$) at the high-activity state and the reciprocal of the sum of transition rates $k_{ \textnormal{on} }$ and $k_{ \textnormal{off} }$. In particular as $c$ approaches $0$, the relative amplification factor $\rho(c)$ can be enormous, thereby indicating that such a switching enzyme ${\bf E}$ can exploit its fluctuations to function as a biological amplifier with a very high gain.

\subsection{Stochastic focusing Network} \label{example:sf}
In this section we apply our results to the famous stochastic focusing network given in \cite{Paulsson00}. This network involves three species: substrate {\bf S}, product {\bf P} and enzyme {\bf E}\footnote{In \cite{Paulsson00}, {\bf S} was called {\bf I} and {\bf E} was called {\bf S}. We have changed the notation to ensure consistency with the notation in this paper.}. The molecules of substrate {\bf S} are produced constitutively at rate $k_{ \textnormal{in} }$ and converted into product {\bf P} through a first-order reaction with rate constant $k_p$. Both subtrate and product molecules degrade spontaneously at rates $k_a e$ and $\delta_p$ respectively, where $e$ denotes the current state or abundance level of enzyme {\bf E}. The schematic representation of these reactions is as follows
\begin{linenomath*}
\begin{align}
\label{defn:reac_net_sf}
\emptyset   \xrightleftharpoons[  k_a  e  ]{k_{ \textnormal{in} } } {\bf S}  \stackrel{k_p}{ \longrightarrow }   {\bf P}    \stackrel{ \delta_p }{ \longrightarrow } \emptyset
\end{align}
\end{linenomath*}
The enzymatic dynamics in this example is given by the Markovian birth-death process with birth-rate $k_s$ and death-rate $k_d$:
\begin{linenomath*}
\begin{align}
\label{defn_enzy_kin_sf}
\emptyset  \stackrel{k_s}{ \longrightarrow }   {\bf E}    \stackrel{ k_d }{ \longrightarrow } \emptyset.
\end{align}
\end{linenomath*}
This process evolves on state-space $\N_0$, which is the set of all nonnegative integers and its unique stationary distribution is Poisson with mean $k_s/k_d$. We assume that the initial enzymatic state is a random variable with this stationary distribution.

Multiplying the rate constants $k_s$ and $k_d$ by $c$, we obtain enzymatic kinetics whose speed relative to the substrate is $c$. Let $m_{ \textnormal{eq} }^{({\bf S})}(c)$ and $m_{ \textnormal{eq} }^{({\bf P})}(c)$ denote the steady-state means of substrate and product respectively when the relative enzyme speed is $c$. From the \emph{first-order moment equations} for the network \eqref{defn:reac_net_sf} one can easily show (see Supplementary Material) that for any $c \geq 0$
\begin{linenomath*}
\begin{align}
\label{simpl_reln_sf}
m_{ \textnormal{eq} }^{({\bf P})}(c) = \frac{k_p}{ \delta_p } m_{ \textnormal{eq} }^{({\bf S})}(c).
\end{align}
\end{linenomath*}
To study the amplification of steady-state means due to enzymatic fluctuations we use the relative stochastic amplification factor defined by \eqref{defn_stochampl}. Due to the linear relationship \eqref{simpl_reln_sf} between $m_{ \textnormal{eq} }^{({\bf S})}(c)$ and $m_{ \textnormal{eq} }^{({\bf P})}(c)$, these amplification factors are same for both product and subtrate. Therefore we can understand the amplification phenomenon by replacing network \eqref{defn:reac_net_sf} with our simplified scheme (see Figure \ref{fig:model}), where the degradation rate at time $t$ is given by
\begin{linenomath*}
\begin{align*}
\gamma_c(t) = k_p + k_a E_c(t),
\end{align*}
\end{linenomath*}
and $E_c(t)$ denotes the state at time $t$ of enzymatic kinetics with relative speed $c$. Let $\gamma_i = k_p + k_a i$ for each $i=0,1,\dots$. Note that $ ( \gamma_c(t) )_{t  \geq 0} $ is a CTMC with state-space
\begin{linenomath*}
\begin{align*}
\Gamma = k_p + k_a \N_0 = \{ \gamma_1,\gamma_2,\dots\}
\end{align*}
\end{linenomath*}
and stationary distribution\footnote{This stationary distribution is obtained by applying the linear change of variables $\gamma = k_p + k_a e $ on the Poisson distribution with mean $k_s/k_d$.}
\begin{linenomath*}
\begin{align}
\label{def_stpi:sf}
\pi(\gamma_i) = \frac{e^{ - \frac{k_s}{k_d}  }}{i!} \left(  \frac{k_s}{k_d}  \right)^i \qquad \textnormal{for any } \qquad \gamma_i = k_p + k_a i \in \Gamma.
\end{align}
\end{linenomath*}
This CTMC transitions from state $\gamma_i$ to state $\gamma_{i+1}$ at rate $c k_s$ and from state $\gamma_i$ to state $\gamma_{i-1}$ at rate $c i k_d$. In other words, the generator for this CTMC is given by
\begin{linenomath*}
\begin{align}
\label{gen_sf}
\mathbb{Q}_c f( \gamma_i ) = c  k_s (f (\gamma_{i+1}) - f( \gamma_i ) ) +c i k_d (f (\gamma_{i-1}) - f( \gamma_i ) )
\end{align}
\end{linenomath*}
for any bounded function $f : \Gamma \to \R$.

In the rest of this section, we denote the steady-state substrate mean by $m_{ \textnormal{eq} }(c)$ instead of $m_{ \textnormal{eq} }^{({\bf S})}(c)$. Since the state-space $\Gamma$ is not finite, we cannot use the results from Section \ref{sec:mainanalysisctmc} to compute $m_{ \textnormal{eq} }(c)$. However we can easily simulate the paths of process $( \gamma_c(t) )_{t  \geq 0}$ with Gillespie's Algorithm \cite{GP}, and obtain samples of the random variable $\tau_c$ defined by \eqref{defn:tau}. The corresponding sample mean then serves as an estimator for $m_{ \textnormal{eq} }(c)$ (see part (C) of Proposition \ref{prop_general}). Note that the steady-state substrate mean in the absence of enzymatic fluctations is simply given by
\begin{linenomath*}
\begin{align*}
m_{ \textnormal{eq} }^{ \textnormal{(det)} } =\frac{k_{  \textnormal{in} }}{ k_p + k_a  \left(\frac{k_s}{k_d} \right)  }.
\end{align*}
\end{linenomath*}
Dividing $m_{ \textnormal{eq} }(c)$ by $m_{ \textnormal{eq} }^{ \textnormal{(det)} }$ and subtracting $1$, we obtain an estimate for the relative stochastic amplification factor $\rho(c)$ (see \eqref{defn_stochampl}). This factor only depends on four rate constants $k_p, k_a,k_s$ and $k_d$ which we now set as
\begin{linenomath*}
\begin{align}
\label{rate_constants_sf}
k_p = 0.35, \quad k_a = 0.25, \quad k_d = 1 \quad \textnormal{and} \quad k_s = 1.
\end{align}
\end{linenomath*}
We estimate $\rho(c)$ for several values of $c$ in the interval $(0,20)$ and plot these estimates in Figure \ref{fig1:sf}. For each value of $c$, $\rho(c)$ was estimated using $10^5$ samples of $\tau_c$ and the resulting standard error\footnote{The standard error is simply the standard deviation of the distribution of the sample mean.} is also displayed in Figure \ref{fig1:sf}. In Section \ref{sec:approxformula} we develop an approximate expression $\hat{ \rho}(c)$ \eqref{main_ampl_formula} for the relative amplification factor which is likely to hold even though the state-space $\Gamma$ is not finite. Using the stationary distribution $\pi$ \eqref{def_stpi:sf} and the generator $\mathbb{Q}_c$ (with $c=1$) we estimate the maximum amplification factor $\rho_{ \textnormal{max} }$ \eqref{defn_rho_max} and the normalized Dirichlet form $\theta$ \eqref{thetaisdif2} as $\rho_{ \textnormal{max} } = 0.1703$ and $\theta  = 2.1988$ respectively. With these values we evaluate the map $c \mapsto \hat{ \rho}(c)$ and plot it in the interval $(0,20)$ in Figure \ref{fig1:sf}. The close agreement between the estimated and the approximate values of the relative amplification factor can be easily seen. In Figure \ref{fig1:sf} we also indicate the threshold speed $c_\epsilon$ \eqref{cthreshold} for the $1\%$ threshold level (i.e. $\epsilon = 0.01$). This threshold speed is $c_\epsilon = 7.2903$ which indicates that if $c < 7.2903$ then enzymatic fluctuations will amplify the steady-state substrate mean by more than $1\%$ in comparison to the deterministic case. In other words, the relative error in assuming that the enzyme activity is deterministic \emph{exceeds} $1\%$ if $c <7.2903$.

\begin{figure}[h]
\begin{center}
    \includegraphics[width=0.7\textwidth]{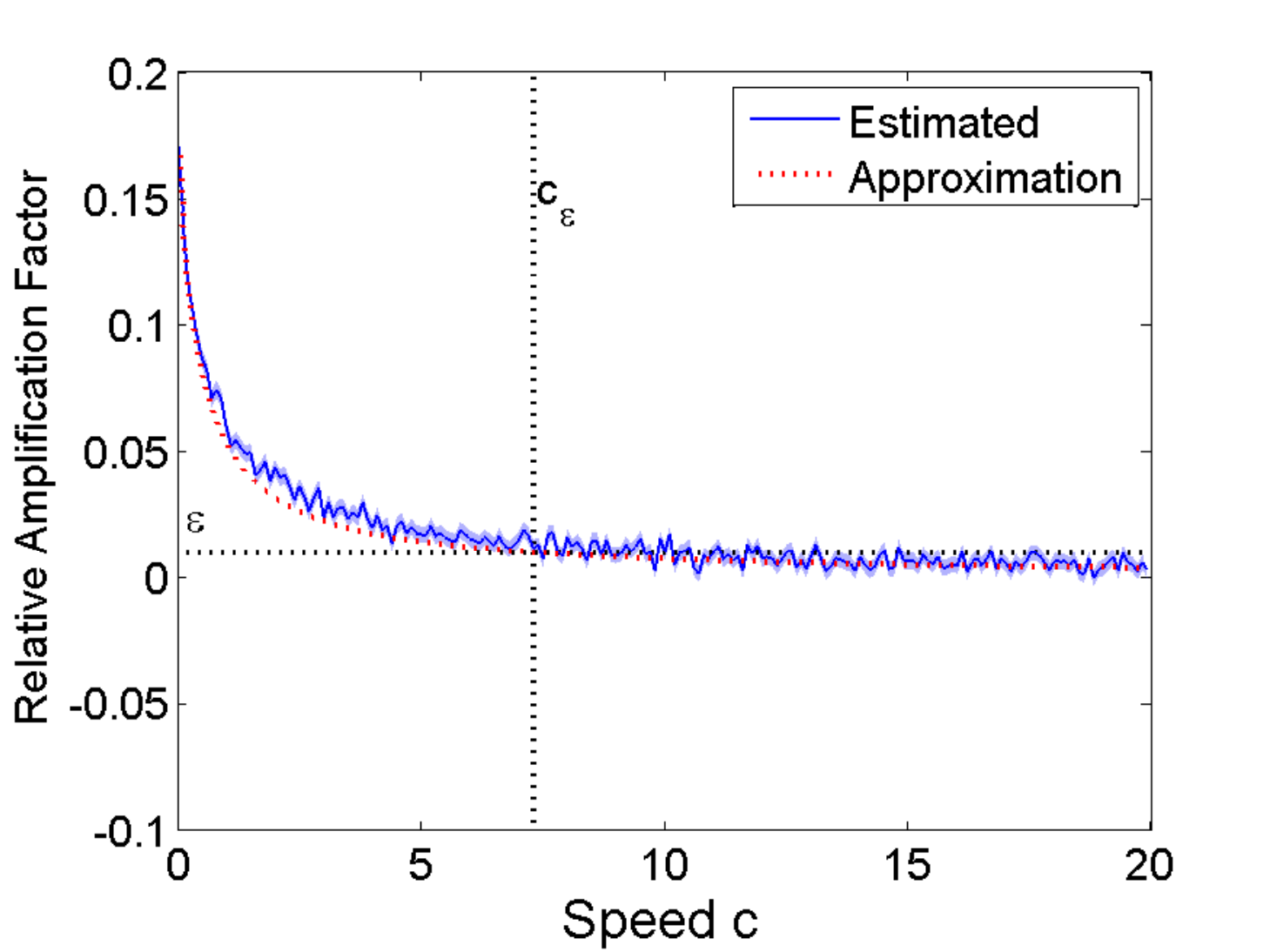}
\end{center}
    \caption{{\bf Stochastic focusing network:} Comparison of the estimated values of the exact relative amplification factor $\rho(c)$ \eqref{defn_stochampl} with the approximate factor $\hat{\rho}(c)$ obtained from formula \eqref{main_ampl_formula}. The threshold speed $c_\epsilon$ for the $1\%$ tolerance level ($\epsilon = 0.01$) is also marked. The estimated values are obtained using the estimator based on formula \eqref{eqnetauc} with $10^5$ samples. The shaded region represents the symmetric one standard deviation interval around the mean.}
    \label{fig1:sf}
\end{figure}

We now explore the effects of changing the levels of noise in the enzymatic activity, on the relative amplification factor for the steady-state substrate mean. This noise can be measured using the \emph{coefficient of variation} (CV)\footnote{The coefficient of variation of a probability distribution is its standard deviation divided by its mean. It measures the dispersion of a distribution relative to its mean. } of the stationary distribution for the enzyme abundance. Since this distribution is Poisson with mean $k_s/k_d$, the CV is $(\sqrt{k_d}/\sqrt{k_s})$ which shows that for a fixed $k_d$, we can decrease the relative noise level by simply increasing $k_s$. With this in mind we repeat the above computations (see Figure \ref{fig1:sf}) for three additional values of $k_s$: 5, 10 and 20, and the results are provided in Figure \ref{fig2:sf}. For each value of $k_s$, the corresponding estimates for the maximum relative amplification factor $\rho_{ \textnormal{max} }$, the normalized Dirichlet form $\theta$ and the threshold speed $c_\epsilon$ (for $\epsilon = 0.01$) are given in Table \ref{tab:sf}.

\begin{table}[h]
\centering
\begin{tabular}{|c|c|c|c|} \hline
$k_s$ & $\rho_{ \textnormal{max} }$ & $\theta$ & $c_\epsilon$ \\   \hline
1 & 0.1703 & 2.1988 & 7.2903 \\
5 & 0.1543  & 1.6907 & 8.5349  \\
10 & 0.0930 &0.6782  &12.2383 \\
20 & 0.0494 & 0.2550 & 15.4510 \\  \hline
\end{tabular}
  \caption{Estimates for $\rho_{ \textnormal{max} }$, $\theta$ and $c_\epsilon$ for various values of $k_s$}
\label{tab:sf}
\end{table}

\begin{figure}[h]
\begin{center}
    \includegraphics[width=\textwidth]{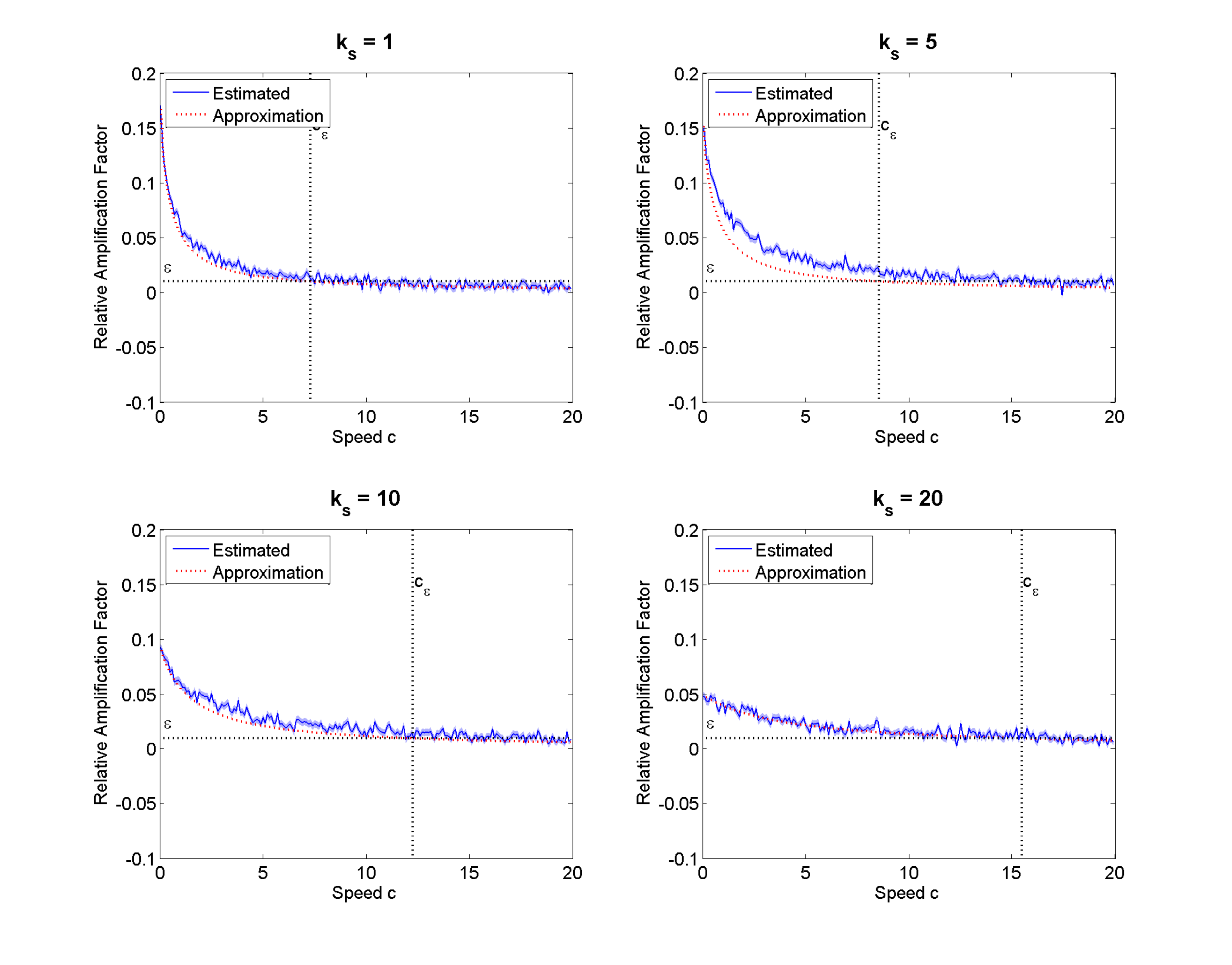}
\end{center}
    \caption{{\bf Stochastic focusing network:} Comparison of the relative amplification factors for various values of $k_s$. As $k_s$ gets larger the level of noise in the enzymatic dynamics decreases. Consequently the maximum value $\rho_{ \textnormal{max} }$ (achieved at speed $c = 0$) declines but the values of the normalized Dirichlet form $\theta$ also declines. This results in lesser convexity of the map $c \mapsto \rho(c)$ which can be seen from the increasing values of the threshold speeds $c_\epsilon$ for $\epsilon = 0.01$.}
    \label{fig2:sf}
\end{figure}

Recall the discussion at the end of Section \ref{sec:approxformula} on the effects of noise in the enzymatic dynamics. From Table \ref{tab:sf} it is clear that as expected, decreasing noise (or increasing $k_s$) results in the decline of both $\rho_{ \textnormal{max} }$ and $\theta$. These parameters influence the threshold speed $c_\epsilon$ (see \eqref{cthreshold}) in opposite ways, but their overall effect is to \emph{increase} $c_\epsilon$, indicating that as the noise levels go down, the relative enzyme speed needs to be higher and higher for the assumption of deterministic enzymatic activity to be acceptable. In other words, even though noise in enzyme activity causes the stochastic amplification effect it also helps in eliminating it.

\section{Discussion} \label{sec:disc}
We examined the mathematical properties of a system consisting of a substrate that is degraded through an enzyme with stochastically fluctuating activity levels. Our analysis focused on the effect of enzymatic fluctuations on the mean substrate abundance and its deviations from the deterministic model predictions. It should be pointed out that even if the substrate inflow rate is assumed to be an independent stationary stochastic process with mean $k_{ \textnormal{in} }$, our results will not be affected.

Whereas a stochastically varying production rate would leave the mean substrate level unaffected and equal to that of the deterministic model, fluctuations in the removal rate of the substrate result in a system that behaves very differently in the stochastic and deterministic regimes due to the product term in the degradation rate of $S$. Our formulas help quantify this discrepancy and study its behavior as the speed of enzymatic fluctuations varies from zero to infinity. They also provide an interesting connection between the amplification effect and the mixing properties of the Markov process describing the enzymatic activity fluctuations, which allow us to determine the speed above which this amplification becomes negligible for a given system parametrization. Note that the study of such systems through the use of approximate stochastic models such as the Linear Noise Approximation \cite{Elf03} is particularly challenging, since these methods typically fail to capture the very strong negative correlations between enzyme activity and substrate that can arise at slow enzyme fluctuations (see also discussion in \cite{Milias15}). On the contrary, the results presented here are valid under much milder simplifying assumptions, and can thus accurately reveal the magnitude of the discrepancy between stochastic and deterministic descriptions of the system.

Given the prevalence of enzymatic interactions in cell biology, stochastic fluctuations in enzyme activity and/or abundance are expected to play a large role in shaping the mean intracellular abundances of substrates \cite{Iversen14}, which could potentially also deviate significantly from the deterministically predicted amounts. Since many enzymes are allosterically regulated \cite{Nussinov16} by their products, substrates or other small signaling molecules, it would be very interesting to also study the effects of this regulation on the statistics of substrates and products, and examine potential noise reduction \cite{Milias15} or signal amplification strategies. As the sensitivity of single-molecule enzymology experimental techniques increases, it may soon be possible to study the phenomena described theoretically in this work within living cells.

\clearpage
\appendix
\setcounter{equation}{0}
\setcounter{figure}{0}
\setcounter{table}{0}
\setcounter{section}{0}
\renewcommand\thesection{S\arabic{section}}

\makeatletter
\renewcommand{\theequation}{S\arabic{equation}}
\renewcommand{\thefigure}{S\arabic{figure}}

\begin{center}
\title{\Large{ \bf Supplementary Material}}
\end{center}

\section{The model} \label{Ssec:model}

In this paper we consider a system where the substrate ${\bf S}$ enters at a constant rate $k_{ \textnormal{in} }$ and is degraded at a rate that depends on the activity state or abundance level of an enzyme ${\bf E}$. This activity state is assumed to fluctuate in time $t$ according to a continuous-time Markov chain (CTMC) $(\gamma(t) )_{t \geq 0 }$ over a finite state-space $\Gamma = \{ \gamma_1,\dots,\gamma_n \}$. The system can be written as
\begin{align}
\label{Sbasic_rnt}
\emptyset  \stackrel{ k_{ \textnormal{in} } }{\longrightarrow} {\bf S}  \stackrel{  \gamma(t) }{\longrightarrow} \emptyset.
\end{align}

The CTMC $(\gamma(t) )_{t \geq 0 }$ is described by its $n \times n$ transition rate matrix $Q=[q_{ij}]$ (see \cite{Norris98}). For any distinct $i,j \in \{1,2,\dots,n\}$, $q_{ij} \geq 0$ denotes the rate at which the process leaves state $\gamma_i$ and enters state $\gamma_j$. The diagonal entries of $Q$ are given by $q_{ii} = -\sum_{j \neq i}q_{ij}$. From now on we assume that the rate matrix $Q$ is irreducible\footnote{A matrix $Q$ is called \emph{irreducible} if there does not exist a permutation matrix $P$ such that the matrix $P Q P^{-1}$ is block upper-triangular.} which implies that there exists a unique stationary distribution $\pi = (\pi_1,\dots,\pi_n) \in \R^n_+$ satisfying
\begin{align*}
Q {\bf 1} = {\bf 0},  \qquad  \pi^T Q  = {\bf 0}^T  \qquad \textnormal{and} \qquad \pi^T {\bf 1} = {\bf 1},
\end{align*}
where ${\bf 0}$ and ${\bf 1}$ denote the $n \times 1$ vectors of all zeroes and ones respectively. Since the state-space is finite and the transition rate matrix $Q$ is irreducible, the CTMC $( \gamma (t) )_{t  \geq 0 }$ is \emph{ergodic} which means that the probability distribution of $\gamma(t)$ converges to the stationary distribution $\pi$ as $t \to \infty$. As we are interested in the steady-state limit, without loss of generality we can assume that the initial state $\gamma(0)$ is distributed according to $\pi$, i.e. $\P( \gamma(0) = \gamma_i )$ for each $i=1,\dots,n$. This ensures that the process $(\gamma(t) )_{t \geq 0 }$ is a \emph{stationary} stochastic process whose finite-dimensional distributions are invariant under time-shifts. This means that for any finite collection of time-points $t_1,t_2,\dots,t_n$ the joint distribution of the random vector $(\gamma(t_1+s), \dots, \gamma(t_n+s) )$ remains the same for all $s \geq 0$. This also implies that various statistical properties of this process do not depend on time. In particular its mean $\E( \gamma(t) )$ is equal to
\begin{align}
\label{Sstationarymean:gamma}
\E( \gamma(t) ) = \E_\pi(\gamma)=  \sum_{i=1}^n \gamma_i \pi_i \qquad \textnormal{for all} \quad t \geq 0,
\end{align}
where $\gamma$ is a $\Gamma$-valued random variable with probability distribution $\pi$ and $\E_\pi(\cdot)$ denotes the expectation w.r.t. this distribution.

In what follows, we need to view enzymatic dynamics at the timescale of substrate kinetics. For this we define a family of processes $( \gamma_c(t) )_{t  \geq 0 }$ parameterised by the ``relative speed" parameter $c$ as
\begin{align}
\label{Sdefngammac}
\gamma_c(t) = \gamma(c t) \qquad \textnormal{for all} \quad t \geq 0.
\end{align}
 Like $( \gamma(t) )_{t \geq 0}$, the process $( \gamma_c(t) )_{t \geq 0}$ is also a CTMC over state-space $\Gamma = \{ \gamma_1,\dots,\gamma_n \}$ with transition rate matrix $Q_c= cQ$ and initial distribution $\pi$. Since $( \gamma(t) )_{t \geq 0}$ is stationary, this process is also stationary with the same mean given by $\E_\pi(\gamma) = \E( \gamma_c(t) ) $ for all times $t \geq 0$. Replacing $( \gamma(t) )_{t \geq 0}$ by $( \gamma_c(t) )_{t  \geq 0 }$ in \eqref{Sbasic_rnt}, we will study how the steady-state mean of substrate abundance depends on the fluctuation speed $c$.

Given a sample path of the enzyme dynamics $( \gamma_c(t) )_{t  \geq 0 }$ with relative speed $c$, we regard the dynamics of substrate molecular counts as a jump Markov chain $( S_c(t))_{t \geq 0}$ over the set of nonnegative integers $\N_0= \{0,1,2,\dots\}$. This Markov chain can be written in the random time change representation \cite{EK} as
\begin{align}
\label{Ssub_rtc}
S_c(t) = S_c(0) + Y_1(  k_{ \textnormal{in} } t  ) - Y_2\left( \int_0^t \gamma_c(u) S_c(u)du \right),
\end{align}
where $Y_1$ and $Y_2$ are independent, unit rate Poisson processes. Here the Poisson processes $Y_1$ and $Y_2$ capture the intermittency in the firing of production and degradation reactions. This intermittency becomes unimportant if the substrate is present in high copy-numbers \cite{KurtzLLn1} and in this case one can regard $( S_c(t))_{t \geq 0}$ as the dynamics of substrate \emph{concentration}\footnote{The concentration of any species is its copy-number divided by the system volume.}, specified by the following ODE
\begin{align}
\label{Ssub_ode}
\frac{ d S_c(t)}{d t} = k_{ \textnormal{in} } -  \gamma_c(t) S_c(t).
\end{align}
Let $m_c(t) = \E(S_c(t))$ for each $t \geq 0$, where $(S_c(t))_{t  \geq 0 }$ evolves according to either \eqref{Ssub_rtc} or \eqref{Ssub_ode}. Our goal in this paper is to understand the role of fluctuations in the catalytic activity of enzyme ${\bf E}$ in determining the steady-state value of the mean
\begin{align}
\label{Ssteadystatemean}
m_\textnormal{eq}(c) = \lim_{t \to \infty} m_c(t).
\end{align}

\section{Expressions for $m_{ \textnormal{eq} }(c)$: The general case } \label{Ssec:mainanalysis} 
In this section we prove Proposition 2.1 in the main paper. For convenience, we restate this proposition below.
\begin{proposition}
\label{Sprop_general_supp}
Suppose $ ( \gamma(t) )_{t \geq 0 }$ is a real-valued stationary stochastic process with stationary distribution $\pi$ and state-space $\Gamma$ satisfying
\begin{align}
\label{Sgamma_inf}
\inf\{x : x \in \Gamma \}  \geq \epsilon
\end{align}
for some $\epsilon >0$. Let $ ( \gamma_c(t) )_{t \geq 0}$ be the speed $c$ version of this process given by \eqref{Sdefngammac} and define the substrate dynamics $( S_c(t) )_{t \geq 0 }$ either by \eqref{Ssub_rtc} or by \eqref{Ssub_ode}. Let $m_c(t) =\E (S_c(t))$ and let the steady-state limit $m_{ \textnormal{eq} }(c)$ be given by \eqref{Ssteadystatemean}. Then we have the following:
\begin{itemize}
\item[(A)] The value $m_{ \textnormal{eq} }(c)$ is well-defined (i.e. the limit in \eqref{Ssteadystatemean} exists) and it is given by
\begin{align}
\label{Smceqform1}
m_{ \textnormal{eq} }(c) = \lim_{t \to \infty}m_c(t)=   k_{ \textnormal{in} } \int_{0}^\infty   \E \left( e^{ -  \int_0^{s} \gamma_c(u)du }    \right) ds.
\end{align}
\item[(B)] Let $\tau_c$ is the random variable defined by
\begin{align}
\label{Sdefn:tau}
\tau_c = \inf\left\{ t \geq 0 :  \int_0^t \gamma_c(s)ds = - \ln u \right\},
\end{align}
where $u$ is an independent random variable with the uniform distribution on $[0,1]$. Then we have
\begin{align}
\label{Seqnetauc}
m_{ \textnormal{eq} } (c) = k_{ \textnormal{in} } \E( \tau_c ).
\end{align}
\item[(C)] The limits below are satisfied as $c \to \infty$ and $c \to 0$ respectively:
\begin{align}
\label{Smdeterministic}
\lim_{c \to \infty} m_{ \textnormal{eq} }(c) &=  k_{ \textnormal{in} } \int_{0}^\infty  e^{ -s \E_\pi( \gamma ) } ds = \frac{ k_{ \textnormal{in} } }{ \E_\pi( \gamma ) } := m^{ \textnormal{(det)} }_{ \textnormal{eq} } \\
\label{Smstatic}
\textnormal{and} \qquad  \lim_{c \to 0} m_{ \textnormal{eq} }(c) &=  k_{ \textnormal{in} }\E\left(   \int_{0}^\infty  e^{ -s  \gamma(0) } ds \right)=  k_{ \textnormal{in} } \E \left( \frac{1}{\gamma(0)} \right) = k_{ \textnormal{in} } \E_\pi\left( \frac{1}{\gamma} \right):= m^{ \textnormal{(static)} }_{ \textnormal{eq} }.
\end{align}
\end{itemize}
\end{proposition}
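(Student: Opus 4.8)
The plan is to prove the parts in the order (A), (B), (C), with (A) carrying all the analytic content and (B), (C) following quickly from the formula \eqref{Smceqform1} it establishes. For (A) I would first fix a realization of the enzyme path and obtain a \emph{pathwise} formula for $S_c$. In the ODE case \eqref{Ssub_ode} the equation is linear, so the integrating factor $\exp(\int_0^t \gamma_c(s)ds)$ gives at once
\begin{align*}
S_c(t) = S_c(0)\,e^{-\int_0^t \gamma_c(s)ds} + k_{\textnormal{in}}\int_0^t e^{-\int_s^t \gamma_c(u)du}\,ds.
\end{align*}
In the jump case \eqref{Ssub_rtc} I would instead derive a moment equation: conditioning on the enzyme path and using that the compensated Poisson integrals are martingales, the conditional mean $n_c(t) = \E(S_c(t)\mid \gamma_c)$ satisfies $n_c(t) = \E(S_c(0)) + k_{\textnormal{in}} t - \int_0^t \gamma_c(u) n_c(u)\,du$, i.e. the \emph{same} linear ODE, and hence the same closed form. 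Averaging over the enzyme path (and over $S_c(0)$, which is independent of it) then produces in both cases the single expression
\begin{align*}
m_c(t) = \E\!\left(S_c(0)\,e^{-\int_0^t \gamma_c(s)ds}\right) + k_{\textnormal{in}}\int_0^t \E\!\left(e^{-\int_s^t \gamma_c(u)du}\right)ds,
\end{align*}
which simultaneously shows that $m_c(t)$ is the same for the two representations.

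Next I would let $t \to \infty$. Stationarity of $(\gamma_c(t))_{t\ge 0}$ means $\int_s^t \gamma_c(u)\,du$ has the same law as $\int_0^{t-s}\gamma_c(u)\,du$, so after the substitution $s \mapsto t-s$ the second term equals $k_{\textnormal{in}}\int_0^t \E(\exp(-\int_0^s \gamma_c(u)du))\,ds$. The lower bound \eqref{Sgamma_inf}, which yields $\int_0^t \gamma_c(u)\,du \ge \epsilon t$, then does everything: the boundary term is bounded by $\E(S_c(0))\,e^{-\epsilon t}\to 0$, and the integrand is dominated by $e^{-\epsilon s}$, so the improper integral converges. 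This gives the existence of the limit and formula \eqref{Smceqform1}, proving (A).

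Parts (B) and (C) then follow from \eqref{Smceqform1}. For (B) I would condition on the path and use that $A(t) := \int_0^t \gamma_c(s)ds$ is continuous and strictly increasing (again by $\gamma_c \ge \epsilon$) from $0$ to $\infty$, so that $\{\tau_c > t\} = \{u < e^{-A(t)}\}$ and hence $\P(\tau_c > t \mid \gamma_c) = e^{-A(t)}$; integrating the tail and averaging (Tonelli, nonnegative integrand) gives $\E(\tau_c) = \int_0^\infty \E(e^{-\int_0^t \gamma_c(s)ds})\,dt = m_{\textnormal{eq}}(c)/k_{\textnormal{in}}$, which is \eqref{Seqnetauc}. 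For (C) I would pass the limit inside the integral in \eqref{Smceqform1}, justified both times by the uniform majorant $e^{-\epsilon s}$: as $c\to\infty$, writing $\int_0^s \gamma_c(u)du = s\left(\frac{1}{cs}\int_0^{cs}\gamma(u)du\right)$ and invoking the ergodic theorem (Theorem 10.6 in \cite{Kal}) gives a.s. convergence to $s\,\E_\pi(\gamma)$ and hence \eqref{Smdeterministic}; as $c\to 0$, right-continuity of paths gives $\gamma_c(u)=\gamma(cu)\to\gamma(0)$, so $\int_0^s \gamma_c(u)du \to s\gamma(0)$, and using $\gamma(0)\sim\pi$ yields \eqref{Smstatic}.

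The hard part will be the rigorous derivation of the common formula for $m_c(t)$ in the jump case, where one must justify interchanging conditional expectation with the time integral and verify that $\E(\int_0^t \gamma_c(u)S_c(u)du)$ is finite; here the crude pathwise bound $S_c(t) \le S_c(0) + Y_1(k_{\textnormal{in}}t)$ (deaths only decrease the count) supplies the integrability needed to close the moment equation. Every remaining limit interchange reduces to the single uniform estimate $\int_0^t \gamma_c(u)du \ge \epsilon t$ coming from \eqref{Sgamma_inf}.
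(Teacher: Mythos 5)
Your proposal is correct and follows essentially the same route as the paper's proof: condition on the enzyme path to get the linear conditional-moment ODE, solve it with the integrating factor $\exp(\int_0^t\gamma_c(s)ds)$, average and use stationarity plus the uniform bound $\int_0^t\gamma_c(u)du\ge\epsilon t$ to pass $t\to\infty$; then the tail-probability computation for (B) and the ergodic-theorem/path-continuity limits with the $e^{-\epsilon s}$ majorant for (C). The only difference is that you spell out the martingale and integrability details for the jump representation, which the paper treats as immediate.
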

\begin{proof}
Let $\{ \mathcal{F}_t\}$ be the filtration generated by the process $ (\gamma_c(t) )_{ t \geq 0}$ and let $\mathcal{F}_\infty = \lim_{ t \to \infty} \mathcal{F}_t$ be its limiting value. Given the information $\mathcal{F}_\infty$, the random path of the process $ ( \gamma_c(t) )_{ t \geq 0}$ is completely known. Hence we can formulate the ODE for the conditional first-moment $\E( S_c(t) \vert \mathcal{F}_\infty )$ as follows
\begin{align}
\label{Smainode}
\frac{d}{dt} \E( S_c(t) \vert \mathcal{F}_\infty ) = k_{ \textnormal{in} } - \gamma_c(t) \E( S_c(t) \vert \mathcal{F}_\infty ).
\end{align}
This equation remains unchanged whether we use representation \eqref{Ssub_rtc} or \eqref{Ssub_ode} for the substrate dynamics $(S_c(t))_{t \geq 0}$. Using $\exp( \int_0^t \gamma_c(s)ds )$ as the integrating factor we can write \eqref{Smainode} as
\begin{align*}
\frac{d}{dt} \left(  e^{ \int_0^t \gamma_c(s)ds }  \E( S_c(t) \vert \mathcal{F}_\infty ) \right) = k_{\textnormal{in}}e^{ \int_0^t \gamma_c(s)ds }.
\end{align*}
Finally, integrating both sides w.r.t. time $t$ we obtain
\begin{align*}
 \E( S_c(t) \vert \mathcal{F}_\infty ) =  S_c(0) e^{ -  \int_0^t \gamma_c(s)ds } + k_{\textnormal{in}} \int_{0}^t   e^{ -  \int_s^t \gamma_c(u)du }    ds.
\end{align*}
Taking expectations we get
\begin{align}
\label{Smainodesols}
m_c(t)= \E( S_c(t)  ) &=  S_c(0) \E \left( e^{ -  \int_0^t \gamma_c(s)ds } \right)  +  k_{\textnormal{in}} \E\left( \int_{0}^t  e^{ -  \int_s^t \gamma_c(u)du }  \right)    ds \notag  \\
& = S_c(0)  \E \left( e^{ -  \int_0^t \gamma_c(s)ds } \right) +  k_{\textnormal{in}} \int_{0}^t  \E \left( e^{ -  \int_s^t \gamma_c(u)du }   \right)    ds ,
\end{align}
where the last equation follows from the Fubini's theorem. Since the process $( \gamma_c(t) )_{t  \geq 0 }$ is stationary, the distribution of the random variable $ \int_s^t \gamma_c(u)du$ is same as the distribution of $ \int_0^{t-s} \gamma_c(u)du$. Hence using a simple change of variables we can write
\begin{align*}
 \int_{0}^t   \E \left( e^{ -  \int_s^t \gamma_c(u)du }  \right)  ds  =  \int_{0}^t   \E \left( e^{ -  \int_0^{t-s} \gamma_c(u)du }  \right)  ds =  \int_{0}^t   \E \left( e^{ -  \int_0^{s} \gamma_c(u)du } \right)  ds ,
\end{align*}
which gives us the following formula for $m_c(t)$
\begin{align}
\label{Sformulamt}
m_c(t)=  S_c(0)  \E \left( e^{ -  \int_0^t \gamma_c(s)ds } \right)  +  k_{ \textnormal{in} } \int_{0}^t   \E \left( e^{ -  \int_0^{s} \gamma_c(u)du }  \right)  ds.
\end{align}
Using \eqref{Sgamma_inf} we see that
\begin{align*}
\E( S_c(0)e^{ -  \int_0^t \gamma_c(s)ds } ) \leq \E( S_c(0) )e^{ -  t \epsilon }  \to 0  \qquad \textnormal{as} \qquad t \to \infty,
\end{align*}
and using the dominated convergence theorem we can conclude that
\begin{align*}
\lim_{t \to \infty} \int_{0}^t   \E \left( e^{ -  \int_0^{s} \gamma_c(u)du } \right)   ds  = \int_{0}^{\infty}   \E \left( e^{ -  \int_0^{s} \gamma_c(u)du }  \right)  ds .
\end{align*}
Taking the limit $t \to \infty$ in \eqref{Sformulamt} we obtain the formula \eqref{Smceqform1} for $m_{ \textnormal{eq} }(c)$. This finishes the proof of part (A) of the proposition.

Let $\tau_c$ be the random variable defined by \eqref{Sdefn:tau}. Since it is a continuous random variable with range $[0,\infty)$ we can write
\begin{align}
\label{Sexpandprob}
 \E( \tau_c ) = \int_0^\infty \P( \tau_c > t ) dt.
\end{align}
Let $\{ \mathcal{F}_t \}$ be the filtration as defined before. As $u$ in \eqref{Sdefn:tau} is a uniform $[0,1]$ random variable independent of the process $( \gamma_c(t) )_{t  \geq 0}$ we have
\begin{align*}
\P\left( \tau_c > t  \vert  \mathcal{F}_t   \right) & = \P\left(  \int_0^t \gamma_c(s)ds  < - \ln u  \middle\vert  \mathcal{F}_t   \right)  = \P\left( u < e^{- \int_0^t \gamma_c(s)ds}   \middle\vert  \mathcal{F}_t   \right) = e^{- \int_0^t \gamma_c(s)ds}.
\end{align*}
Taking expectations both sides we get
\begin{align*}
\P( \tau_c > t ) = \E\left( e^{- \int_0^t \gamma_c(s)ds} \right).
\end{align*}
Substituting this in \eqref{Sexpandprob} we obtain
\begin{align*}
 \E( \tau_c ) = \int_0^\infty \P( \tau_c > t ) dt = \int_0^\infty  \E\left( e^{- \int_0^t \gamma_c(s)ds} \right) dt .
\end{align*}
Replacing this integral in \eqref{Smceqform1} by $\E(\tau_c)$ proves formula \eqref{Seqnetauc}. This finishes the proof of part (B) of the proposition. The proof of part (C) is already outlined in the main paper.
\end{proof}

\section{Expressions for $m_{ \textnormal{eq} }(c)$: The finite CTMC case} \label{Ssec:mainanalysisctmc}

In this section we prove Theorem 2.2 in the main paper. For convenience, we restate this theorem below.
\begin{theorem}
\label{Stheorem_general_supp}
Suppose $ ( \gamma(t) )_{t \geq 0 }$ is a stationary CTMC with transition rate matrix $Q$, stationary distribution $\pi$ and state-space $\Gamma = \{ \gamma_1,\dots,\gamma_n \}$ (see Section \ref{Ssec:model}). Let $ ( \gamma_c(t) )_{t \geq 0}$ be the speed $c$ version of this process given by \eqref{Sdefngammac} and define the substrate dynamics $( S_c(t) )_{t \geq 0 }$ either by \eqref{Ssub_rtc} or by \eqref{Ssub_ode}. Let the steady-state substrate mean $m_{ \textnormal{eq} }(c)$ be given by \eqref{Ssteadystatemean} and the diagonal matrix $D$ be defined by
\begin{align}
\label{SdiagmatrixD}
D = \textnormal{Diag}(\gamma_1,\dots,\gamma_n).
\end{align}
Then we have the following:
\begin{itemize}
\item[(A)] The matrix $(D -c Q)$ is invertible and $m_{ \textnormal{eq} }(c)$ can be expressed as
\begin{align}
\label{Smceqform2}
m_{ \textnormal{eq} }(c) = k_{ \textnormal{in} } \left[ \pi^T (D -c Q)^{-1} {\bf 1} \right].
\end{align}
\item[(B)] Suppose the matrix $\tilde{Q} = D^{-1} Q$ is diagonalizable and let $\lambda_1,\dots,\lambda_n$ be its eigenvalues. For each $i=1,\dots,n$ define $\alpha_i$ by
\begin{align}
\label{Sdefn_weights}
\alpha_i =  \langle \pi, u_i \rangle  \langle w_i,  D^{-1}{\bf 1}  \rangle = (\pi^T u_i) ( w^T_i  D^{-1}{\bf 1}  )  \qquad \textnormal{ for each }\qquad i=1,\dots,n.
\end{align}
Then $m_{ \textnormal{eq} }(c)$ can be expressed as
\begin{align}
\label{Smainmceqformula}
m_{ \textnormal{eq} }(c) = k_{ \textnormal{in} } \sum_{i=1}^n  \left( \frac{ \alpha_i }{1 - c \lambda_i } \right).
\end{align}
\item[(C)] The following relation is satisfied for any $c \geq 0$
\begin{align}
\label{Ssandwichprop}
 m^{ \textnormal{(det)} }_{ \textnormal{eq} }  \leq m_{ \textnormal{eq} }(c)  \leq  m^{ \textnormal{(static)} }_{ \textnormal{eq} }.
\end{align}
\end{itemize}
\end{theorem}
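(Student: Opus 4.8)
The plan is to treat the three parts in turn: derive (A) from the general integral formula \eqref{mceqform1} by a Feynman--Kac computation, obtain (B) by a factorization together with the diagonalization already set up in the main text, and prove the sandwich (C) by transferring everything to a suitably tilted reference measure under which both inequalities collapse to the positivity and contractivity of a Markov resolvent. For part (A), I would start from \eqref{mceqform1} and evaluate the inner expectation explicitly. Writing $v_i(s) = \E\big(e^{-\int_0^s \gamma_c(u)\,du}\mid \gamma_c(0)=\gamma_i\big)$ and using that $(\gamma_c(t))_{t\ge0}$ is a CTMC with generator $cQ$ and killing potential $D$, the backward (Feynman--Kac) equation gives $v'(s)=(cQ-D)v(s)$ with $v(0)=\mathbf{1}$, so $v(s)=e^{(cQ-D)s}\mathbf{1}$ and $\E\big(e^{-\int_0^s \gamma_c}\big)=\pi^T v(s)$. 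Since every state satisfies $\gamma_i\ge\epsilon>0$, one has $0\le v_i(s)\le e^{-\epsilon s}$, so $\int_0^\infty e^{(cQ-D)s}\,ds$ converges; this forces every eigenvalue of $(cQ-D)$ to have negative real part, hence $(D-cQ)$ is invertible, and integrating $\frac{d}{ds}e^{(cQ-D)s}=(cQ-D)e^{(cQ-D)s}$ over $[0,\infty)$ yields $\int_0^\infty e^{(cQ-D)s}\,ds=(D-cQ)^{-1}$. Substituting into \eqref{mceqform1} gives \eqref{mceqform2}.

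Part (B) is then algebra: factor $D-cQ=D(I-c\tilde{Q})$ with $\tilde{Q}=D^{-1}Q$, observe that $1-c\lambda_i\neq0$ for all $c\ge0$ (since $\lambda_1=0$ and $\textnormal{Re}(\lambda_i)<0$ otherwise), so $(D-cQ)^{-1}=(I-c\tilde{Q})^{-1}D^{-1}$, and expand the resolvent through $\tilde{Q}=\sum_i \lambda_i u_i w_i^T$ exactly as in the derivation of \eqref{mainmceqformula}; collecting the weights \eqref{defn_weights} gives the claim.

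The crux is the sandwich (C). The lower bound $m^{(\textnormal{det})}_{\textnormal{eq}}\le m_{\textnormal{eq}}(c)$ is already available from the Jensen estimate \eqref{positivedeviations}, which holds for any stationary process, but I would in fact recover it together with the upper bound from one construction. Rewrite $m_{\textnormal{eq}}(c)=k_{\textnormal{in}}\,\pi^T(I-c\tilde{Q})^{-1}g$ with $g=D^{-1}\mathbf{1}=(1/\gamma_i)_i$, and pass to the tilted distribution $\tilde{\pi}_i=\pi_i\gamma_i/\E_\pi(\gamma)$, which is precisely the stationary distribution of the \emph{bona fide} generator $\tilde{Q}=D^{-1}Q$ (nonnegative off-diagonals, zero row sums, irreducible). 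For the weighted inner product $\langle a,b\rangle_{\tilde{\pi}}=\sum_i \tilde{\pi}_i a_i b_i$ a one-line check gives $\pi^T f=\E_\pi(\gamma)\,\langle g,f\rangle_{\tilde{\pi}}$, so with $w=(I-c\tilde{Q})^{-1}g$ we get $m_{\textnormal{eq}}(c)/k_{\textnormal{in}}=\E_\pi(\gamma)\,\langle g,w\rangle_{\tilde{\pi}}$; moreover $\langle g,\mathbf{1}\rangle_{\tilde{\pi}}=1/\E_\pi(\gamma)$, $\|\mathbf{1}\|_{\tilde{\pi}}=1$, and $\|g\|_{\tilde{\pi}}^2=\E_\pi(1/\gamma)/\E_\pi(\gamma)$. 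Multiplying by $\E_\pi(\gamma)$, the sandwich \eqref{sandwichprop} is therefore \emph{equivalent} to $\langle g,\mathbf{1}\rangle_{\tilde{\pi}}^2\le \langle g,w\rangle_{\tilde{\pi}}\le \|g\|_{\tilde{\pi}}^2$.

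Both of these reduce to a single structural fact: $A:=-\tilde{Q}$ is accretive on $L^2(\tilde{\pi})$, i.e. $\langle f,Af\rangle_{\tilde{\pi}}=-\E_{\tilde{\pi}}(f\,\tilde{Q}f)\ge0$ is the nonnegative Dirichlet form of $\tilde{Q}$ (Lemma 2.1.2 in \cite{Saloff-Coste}), together with $A\mathbf{1}=0$. Accretivity makes the resolvent $(I+cA)^{-1}$ a contraction on $L^2(\tilde{\pi})$, and since both $A$ and its $L^2(\tilde{\pi})$-adjoint annihilate $\mathbf{1}$ (stationarity of $\tilde{\pi}$), the orthogonal splitting $g=\bar g\,\mathbf{1}+g^\perp$ with $\bar g=\langle g,\mathbf{1}\rangle_{\tilde{\pi}}$ is respected by the resolvent, giving $\langle g,w\rangle_{\tilde{\pi}}=\bar g^2+\langle g^\perp,p\rangle_{\tilde{\pi}}$ for $p=(I+cA)^{-1}g^\perp$. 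The identity $\langle g^\perp,p\rangle_{\tilde{\pi}}=\|p\|_{\tilde{\pi}}^2+c\,\langle p,Ap\rangle_{\tilde{\pi}}$ shows the correction term is nonnegative (lower bound), while $\|p\|_{\tilde{\pi}}\le\|g^\perp\|_{\tilde{\pi}}$ from contractivity bounds it above by $\|g^\perp\|_{\tilde{\pi}}^2$ (upper bound). I expect the main obstacle to be exactly locating this accretivity: $D-cQ$ is not symmetric and purely linear manipulations turn out circular, because the relation $\gamma_i\cdot(1/\gamma_i)=1$ collapses every algebraic identity one writes down; the decisive step is to recognize $D^{-1}Q$ as a genuine generator and read off its tilted stationary measure, after which both bounds fall out of one Dirichlet-form positivity.
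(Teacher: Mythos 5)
Your proof is correct. Parts (A) and (B) follow essentially the paper's own route: your backward Feynman--Kac computation for $v(s)=e^{(cQ-D)s}\mathbf{1}$ is just the transpose of the paper's forward equation for $\beta(t)=e^{(cQ^T-D)t}\pi$ (the paper invokes Proposition 4.1 of \cite{GuptaEJP} and also records a second derivation via the method of conditional moments), and (B) is the same resolvent expansion in both. The genuine divergence is in part (C). For the upper bound $m_{\textnormal{eq}}(c)\le m^{\textnormal{(static)}}_{\textnormal{eq}}$ the paper works with $M=\Pi(D-cQ)^{-1}-\Pi D^{-1}$, observes that $M+M^T$ is a symmetric Metzler matrix annihilating the positive vector $(\gamma_1,\dots,\gamma_n)$, and concludes nonpositive definiteness from Perron--Frobenius theory (Theorem 2.6 in \cite{Seneta}); you instead recognize $\tilde{Q}=D^{-1}Q$ as a bona fide generator with tilted stationary measure $\tilde{\pi}_i\propto\pi_i\gamma_i$ and obtain both inequalities simultaneously from accretivity of $-\tilde{Q}$ on $L^2(\tilde{\pi})$ together with contractivity of its resolvent on the orthogonal complement of $\mathbf{1}$. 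Your route is more conceptual: it unifies the two bounds, avoids the symmetrization trick, and makes explicit the Dirichlet-form structure that the paper only introduces later (Section 2.4) for the parameter $\Theta$ --- indeed your quantity $\langle g,Ag\rangle_{\tilde{\pi}}$ is exactly $\Theta/\E_\pi(\gamma)$. The paper's Metzler argument is in exchange more elementary, resting only on entrywise nonnegativity of $(D-cQ)^{-1}$ and a standard M-matrix fact. One small point worth recording in your part (A): the bound $v_i(s)\le e^{-\epsilon s}$ controls only the row sums of $e^{(cQ-D)s}$, so to conclude that every eigenvalue of $cQ-D$ has negative real part you should note that this matrix exponential is entrywise nonnegative (the sub-Markovian semigroup generated by the Metzler matrix $cQ-D$), which upgrades the row-sum bound to decay of the full matrix norm.
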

\begin{proof}
For each $i = 1,\dots,n$ and $t \geq 0$ define
\begin{align*}
\beta_i(t) = \E\left( \ind_{ \{ \gamma_c(t) = s_i \} } e^{ -\int_0^t \gamma_c(s)ds }  \right).
\end{align*}
Let $ \beta(t)$ denote the vector $\beta(t) = ( \beta_1(t) , \dots, \beta_n(t) )$. Note that
\begin{align*}
{\bf 1}^T  \beta(t) = \sum_{i=1}^m \E\left( \ind_{ \{ \gamma_c(t) = s_i \} } e^{ -\int_0^t \gamma_c(s)ds }  \right) =  \E\left(  \left[ \sum_{i=1}^m \ind_{ \{ \gamma_c(t) = s_i \} } \right] e^{ -\int_0^t \gamma_c(s)ds }  \right) = \E\left(  e^{ -\int_0^t \gamma_c(s)ds }  \right).
\end{align*}
Note that $\beta(0) = \pi$ because for each $i$ we have $\beta_i(0) = \E( \ind_{\{ \gamma_c(0) = s_i \} } ) = \P( \gamma_c(0) =s_i ) = \pi_i$. From Proposition 4.1 in \cite{GuptaEJP} we see that $\beta(t)$ satisfies the following ODE:
\begin{align*}
\frac{d \beta }{dt} = (c Q^T - D) \beta(t).
\end{align*}
Since $\beta(0) = \pi$, the solution to this ODE is
\begin{align*}
\beta(t) = e^{ (c Q^T - D) t } \pi.
\end{align*}
Therefore
\begin{align*}
\E\left(  e^{ -\int_0^t \gamma_c(s)ds }  \right) = {\bf 1}^T  \beta(t) = {\bf 1}^T   e^{ (cQ^T - D) t } \pi.
\end{align*}
On integrating from $t = 0$ to $t = \infty$ we get
\begin{align*}
\int_{0}^{  \infty } \E\left(  e^{ -\int_0^t \gamma_c(s)ds }  \right)dt  =\int_0^\infty {\bf 1}^T \beta(t) dt = {\bf 1}^T \left(  \int_{0}^\infty  e^{ (cQ^T - D) t } dt  \right) \pi =  {\bf 1}^T (D -cQ^T) ^{-1} \pi = \pi^T (D -cQ) ^{-1} {\bf 1}.
\end{align*}
This relation along with \eqref{Smceqform1} proves \eqref{Smceqform2}. Since the matrix $Q$ is irreducible and Metzler with non-positive eigenvalues, the matrix $(D-cQ)$ is invertible and its inverse $(D-cQ)^{-1}$ is a matrix with only nonnegative real entries (see Theorem 2.6 in \cite{Seneta}).

We now provide another proof of \eqref{Smceqform2} using the \emph{Methods of Conditional Moments} (MCM) approach of Hasenauer et al. \cite{Hasenauer13}. We thus define $m_i(t):=\mathbb{E}[S(t)|\gamma(t)=\gamma_i]$ and consider the algebraic equations that describe the system of conditional moments for $m_1,\dots,m_n$:
\begin{equation}\label{MCM_ODEs}\frac{dm_i}{dt}=k\pi_i-\gamma_i\pi_im_i+\sum_{j\neq i}q_{ji}\pi_jm_j-\pi_i m_i\sum_{i\neq j}q_{ij}.\end{equation}
If we define $M_i=\pi_im_i$ and consider the limit as $t\to\infty$ in \eqref{MCM_ODEs}, we get the following system of linear equations:
\begin{equation}k\pi_i-\gamma_iM_i+\sum_{j\neq i}q_{ji}M_j-M_i\sum_{i\neq j}q_{ij}=0.\end{equation}
Letting $M:=\begin{bmatrix}M_1&\dots&M_n\end{bmatrix}$, we get
\[M(\mbox{diag}(\gamma_1,\dots,\gamma_n)-Q)=k\pi\implies M=k\pi(\mbox{diag}(\gamma_1,\dots,\gamma_n)-Q)^{-1}.\]
Therefore,
\[m_{ \textnormal{eq} }(c)=\sum_iM_i=k\pi(D-cQ)^{-1}{\bf 1}.\]

The proof of part (B) is outlined in the main paper. For part (C) note that one of the inequalities $ m^{ \textnormal{(det)} }_{ \textnormal{eq} }  \leq m_{ \textnormal{eq} }(c)$ was already shown in the main paper. Hence it suffices to prove that $m_{ \textnormal{eq} } \leq  m^{ \textnormal{(static)} }_{ \textnormal{eq} }$ which is equivalent to
\begin{align}
\label{Ssuffcondtoshow}
\pi^T  (D -cQ)^{-1} {\bf 1}  \leq  \pi^T D^{-1} {\bf 1}.
\end{align}
Let $\Pi$ be the $m \times m$ diagonal matrix with entries $\pi_1,\dots,\pi_m$. Define another $n  \times n$ matrix by
$$ M =\Pi  (D -c Q)^{-1}  - \Pi  D^{-1}. $$
Since $\Pi  D^{-1}$ is a positive diagonal matrix and $\Pi  (D -Q)^{-1} $ is a componentwise nonnegative matrix, we can conclude that $M$ is also a Metzler matrix. In order to prove \eqref{Ssuffcondtoshow} we just need to show that
\begin{align}
\label{Ssuffcondtoshow2}
 {\bf 1}^T   M  {\bf 1} \leq 0
\end{align}
which is equivalent to proving that
\begin{align}
\label{Ssuffcondtoshow3}
{\bf 1}^T  (M+ M^T ) {\bf 1} \leq 0.
\end{align}
Note that $(M + M^T)$ is a symmetric Metzler matrix. Relation \eqref{Ssuffcondtoshow3} will hold if this matrix is nonpositive definite, which is same as saying that all its eigenvalues have nonpositive real parts. Theorem 2.6 in \cite{Seneta} shows that this matrix is nonpositive definite if we can find a componentwise positive vector $v$ such that
\begin{align}
\label{Scompwisepositivevectorv}
(M + M^T) v = {\bf 0}.
\end{align}
Let $v = (\gamma_1,\dots, \gamma_n)$. Note that since $Q {\bf 1} = {\bf 0}$ we have $(D- cQ) {\bf 1} = D {\bf 1} = v $ and so $(D -cQ)^{-1} v= {\bf 1}$. Therefore
\begin{align}
\label{Spart1}
M v =  \Pi (D -cQ)^{-1}  v   - \Pi  D^{-1} v  =\Pi (D -cQ)^{-1}  v  - \pi = {\bf 0}.
\end{align}
Similarly since $Q^T \pi = {\bf 0}$ we have $(D- c Q^T)\pi = D \pi$ and so $\pi  = (D - c Q^T)^{-1} D \pi = (D - cQ^T)^{-1} \Pi v $. Hence
\begin{align}
\label{Spart2}
M^T v =    (D -c Q^T)^{-1} \Pi v   -   D^{-1} \Pi v = \pi  -   \Pi D^{-1}  v = \pi  -  \Pi {\bf 1}   = \pi - \pi ={\bf 0}.
\end{align}
Combining \eqref{Spart1} and \eqref{Spart2} proves \eqref{Scompwisepositivevectorv} and shows that $(M + M^T)$ is a symmetric nonpositive definite matrix. Therefore \eqref{Ssuffcondtoshow3} holds and this finishes the proof of part (C) of this theorem.
\end{proof}

\section{Approximate formula for $m_{ \textnormal{eq} }(c)$} \label{Ssec:approxformula}

In this section we examine the approximate formula for $m_{ \textnormal{eq} }(c)$ and discuss why the approximation error is likely to be small.  Recall from Theorem \ref{Stheorem_general_supp} that $\lambda_1,\dots,\lambda_n$ are the eigenvalues of matrix $\tilde{Q}$. Among these $\lambda_1 = 0$ while the eigenvalues $\lambda_2,\dots,\lambda_n$ have negative real parts. Using part (B) of Theorem \ref{Stheorem_general_supp} gives us this exact formula for $m_{ \textnormal{eq} }(c)$:
\begin{align}
\label{Sresolventformula}
m_{ \textnormal{eq} }(c) = k_{ \textnormal{in} } \left[  \alpha_1 +   \sum_{i=2}^n  \left( \frac{ \alpha_i }{1 - c \lambda_i } \right)   \right].
\end{align}
From limits \eqref{Smdeterministic} and \eqref{Smstatic} we can conclude that
\begin{align}
\label{Salpharelations}
\alpha_1=  \frac{ m^{ \textnormal{(det)} }_{ \textnormal{eq} } }{ k_{ \textnormal{in} } }   \qquad \textnormal{and} \qquad \sum_{i=2}^n  \alpha_i = \left( \frac{m^{ \textnormal{(static)} }_{ \textnormal{eq} } - m^{ \textnormal{(det)} }_{ \textnormal{eq} } }{k_{ \textnormal{in} }}\right).
\end{align}
%
Let $\theta$ denote the following weighted combination of eigenvalues $\lambda_2,\dots,\lambda_n$
\begin{align}
\label{Sdefn_hatlambda}
\theta = -  \frac{  \sum_{i=2}^n \lambda_i \alpha_i }{\sum_{i=2}^n \alpha_i}.
\end{align}
Recall from the main paper that $\theta$ is the \emph{normalized} Dirichlet form which is always positive. We proposed the following approximate formula for $m_{ \textnormal{eq} }(c)$
\begin{align}
\label{Smceqapprox}
\hat{m}_{ \textnormal{eq} }(c) = m^{ \textnormal{(det)} }_{ \textnormal{eq} } +   \left(  \frac{m^{ \textnormal{(static)} }_{ \textnormal{eq} } - m^{ \textnormal{(det)} }_{ \textnormal{eq} }}{1 + c \theta}  \right).
\end{align}

We define the relative error between the exact value $m_{ \textnormal{eq} }(c)$ and its approximation $\hat{m}_{ \textnormal{eq} }(c)$ by
\begin{align*}
\mathcal{E}(c) =  \left( \frac{ m_{ \textnormal{eq} }(c) - \hat{m}_{ \textnormal{eq} }(c) }{ m^{ \textnormal{(static)} }_{ \textnormal{eq} } - m^{ \textnormal{(det)} }_{ \textnormal{eq} }} \right).
\end{align*}
Due to limits  \eqref{Smdeterministic} and \eqref{Smstatic}, we know that this error function is contained between $0$ and $1$. Our goal is to argue that this relative error function has a relatively small magnitude. For this purpose we define a change of variables as
\begin{align*}
x = \frac{c \theta}{1 + c \theta}.
\end{align*}
Note that as $c$ goes from $0$ to $\infty$, $x$ goes from $0$ to $1$. Define a function $f$ by
\begin{align*}
f(x) =   \sum_{i=2}^n  \frac{ \beta_i (1 - x) }{1 -   \left(1 +  \frac{\lambda_i}{\theta}  \right) x  }
\end{align*}
where
  \begin{align*}
\beta_ i = \frac{ \alpha_i }{\sum_{i=2}^n \alpha_i  } =  \alpha_i \left(  \frac{ k_{ \textnormal{in} } }{m^{ \textnormal{(static)} }_{ \textnormal{eq} } - m^{ \textnormal{(det)} }_{ \textnormal{eq} } }  \right)
\end{align*}
for each $i=2,\dots,n$. Replacing $c$ by $x/( (1-x) \theta )$ in \eqref{Sresolventformula} and using \eqref{Salpharelations} we get
\begin{align*}
m_{ \textnormal{eq} }(c) =  m^{ \textnormal{(det)} }_{ \textnormal{eq} } +   \left(  m^{ \textnormal{(static)} }_{ \textnormal{eq} } - m^{ \textnormal{(det)} }_{ \textnormal{eq} }  \right)f(x).
\end{align*}
Similarly $\hat{m}_{ \textnormal{eq} }(c) $ can be written as
\begin{align*}
\hat{m}_{ \textnormal{eq} }(c) = m^{ \textnormal{(det)} }_{ \textnormal{eq} } +   \left(  m^{ \textnormal{(static)} }_{ \textnormal{eq} } - m^{ \textnormal{(det)} }_{ \textnormal{eq} } \right)(1 - x)
\end{align*}
which allows us to express the relative error as
\begin{align}
\label{Ssimpl_eerfunction}
\mathcal{E}(c) = f(x) - 1 + x.
\end{align}

 Note that
\begin{align}
\label{Sbasicfrelns}
f(0) = \sum_{i=2}^n \beta_i = 1 \qquad  \textnormal{and} \qquad  \theta = -  \sum_{i=2}^n \lambda_i \beta_i.
\end{align}
We can compute the first and second order derivatives of function $f(x)$ as
\begin{align*}
f'(x) = \frac{1}{\theta} \sum_{i=2}^n  \frac{ \lambda_i \beta_i  }{ \left( 1 -   \left(1 +  \frac{\lambda_i}{\theta}  \right) x   \right)^2 } \qquad \textnormal{and}  \qquad
f''(x) =\frac{2}{\theta} \sum_{i=2}^n  \frac{ \lambda_i \beta_i \left(1 +  \frac{\lambda_i}{\theta}  \right) }{ \left( 1 -   \left(1 +  \frac{\lambda_i}{\theta}  \right) x   \right)^3 }.
\end{align*}
Therefore from \eqref{Sbasicfrelns} we obtain
\begin{align*}
f'(0) = \frac{1}{\theta} \sum_{i=2}^n   \lambda_i \beta_i = -1  \qquad \textnormal{and}  \qquad
\kappa:= \frac{f''(0)}{2} =\frac{1}{\theta} \sum_{i=2}^n   \lambda_i \beta_i \left(1 +  \frac{\lambda_i}{\theta}  \right) =  \sum_{i=2}^n \beta_i \left( 1 - \frac{ \lambda_i }{\theta} \right)^2.
\end{align*}
We can see that $\kappa = f''(0)/2$ measures the weighted ``spread" of the eigenvalues $\lambda_2,\dots,\lambda_n$ around $-\theta$, where the weights are given by $\beta_2,\dots,\beta_n$. Numerical experiments indicate that generally only a few of these weights are significant while the others are negligibly small. This is precisely the situation where this spread is small since $\theta = -  \sum_{i=2}^n \lambda_i \beta_i$. Hence we can safely assume that $\kappa$ is small. The function $f(x)$ is real-analytic at $x= 0$ and its Taylor series expansion is given by
\begin{align*}
f(x) &= f(0) + f'(0) x + \frac{f''(0)}{2}x^2 + \dots  = 1 - x + \frac{f''(0)}{2}x^2+\dots.
\end{align*}
Using \eqref{Ssimpl_eerfunction} we can conclude that the relative error behaves like $\kappa x^2$ which is small since $\kappa$ is small.

\section{Stochastic Focusing example}

Consider the  stochastic focusing network given in \cite{Paulsson00}. It involves three species: substrate {\bf S}, product {\bf P} and enzyme {\bf E}.  The molecules of substrate {\bf S} are produced constitutively at rate $k_{ \textnormal{in} }$ and converted into product {\bf P} with rate constant $k_p$. Both subtrate and product molecules degrade spontaneously at rates $k_a e$ and $\delta_p$ respectively, where $e$ denotes the current state or abundance level of enzyme {\bf E}. These reactions can be expressed as
\begin{align}
\label{Sdefn:reac_net_sf_supp}
\emptyset   \xrightleftharpoons[  k_a  e  ]{k_{ \textnormal{in} } } {\bf S}  \stackrel{k_p}{ \longrightarrow }   {\bf P}    \stackrel{ \delta_p }{ \longrightarrow } \emptyset.
\end{align}
Let $m^{({\bf S})}(t) $ and $m^{({\bf P})}(t) $ denote the expected abundance level at time $t$ of substrate and product molecules respectively. Furthermore assume that the limit
\begin{align}
\label{Sss_substrate_supp}
 m^{({\bf S})}_{ \textnormal{eq} }  = \lim_{ t \to \infty} m^{({\bf S})}(t)
\end{align}
exists. From the reaction network \eqref{Sdefn:reac_net_sf_supp} it is immediate that
\begin{align*}
\frac{d m^{({\bf P})}(t)   }{dt} = k_p m^{({\bf S})}(t) - \delta_p m^{({\bf P})}(t).
\end{align*}
Solving this ODE we get
\begin{align*}
m^{({\bf P})}(t)  = m^{({\bf P})}(0) e^{ -\delta_p t } + k_p   \int_0^t   e^{ -\delta_p (t-s) } m^{({\bf S})}(s)ds.
\end{align*}
Therefore using limit \eqref{Sss_substrate_supp} we can conclude that
\begin{align*}
 m^{({\bf P})}_{ \textnormal{eq} }  = \lim_{ t \to \infty} m^{({\bf P})}(t) = \lim_{t \to \infty} k_p   \int_0^t   e^{ -\delta_p (t-s) } m^{({\bf S})}(s)ds = \frac{k_p}{\delta_p} m^{({\bf S})}_{ \textnormal{eq} } .
\end{align*}

\bibliographystyle{acm}
\newcommand{\doi}[1]{\href{http://dx.doi.org/#1}{doi:#1}}
 \newcommand{\noop}[1]{}

\end{document}